%% file: main.tex
\documentclass[a4paper, 11pt]{article}
\input{preamble.tex}

\title{The Paradox of the Third Particle is classical}
\date{}
\author{Ladina Hausmann and Renato Renner}
\affil{\small \it Institute for Theoretical Physics, ETH Zürich, 8093 Z\"urich, Switzerland}

\begin{document}
\maketitle
\vspace{-1.6cm}

\begin{abstract} 
The Paradox of the Third Particle arises when particles are described relative to one of them serving as a reference frame. Although it is typically attributed to quantum superpositions, we show that the paradox and its ramifications---such as the relativity of subsystems---already occur classically. In light of this insight, we establish a no-go theorem that holds whenever physical subsystems, whether classical or quantum, are used as reference frames: if one demands that frame transformations be information-preserving, then it is impossible to meaningfully partition the world into subsystems, like individual particles, from the perspective of each frame.
\end{abstract}

\section{Introduction}
By the relativity principle, the position of a particle $B$ is defined only relative to another system, such as a particle $A$. This relational description defines what we call  the \emph{perspective of $A$}. Assuming a fully quantum treatment, where particles may be in spatial superposition, $A$'s own quantum properties affect its perspective. This forces us to revisit a basic question: how does one transform the description of $B$ from $A$'s perspective to that of $A$ from $B$'s perspective? Formulating such quantum reference frame transformations has been extensively studied~\cite{Aharonov1967,Aharonov1967a,Page1983,Aharonov1984,Rovelli1991,Bartlett2007,Palmer2014,Angelo2011,Smith2016,Loveridge2017,Loveridge2018,Giacomini2019,Giacomini2019a,Vanrietvelde2020,Hoehn2020,delaHamette2020,Ballesteros2021,Hamette2021,Hoehn2021,Krumm2021,Mikusch2021,Hoehn_2022,Ali_Ahmad_2022,Glowacki2023,Vanrietvelde2023,Hoehn2023,Hoehn2023a,de_la_Hamette_2023,Kabel_2024,Hausmann2025,Ballesteros2025,Hamette2025,Carette2025,DeVuyst2025a,DeVuyst2025b,carrozza2025,Araujoregado2025,Castro-Ruiz2025,Castroruiz2025comparisons,Garmier2025,Fiore2026,delahamette2026,rothlin2026,Luppi2026}, though without reaching a consensus~\cite{Castroruiz2025comparisons}. This lack of consensus stems from profound conceptual hurdles, one of which is the Paradox of the Third Particle~\cite{Angelo2011}. While the paradox arises across various frameworks, we present it, for concreteness, within what is termed the perspectival approach~\cite{Giacomini2019,delaHamette2020}.

Suppose that, from the perspective of $A$, a random bit $M$ is encoded in the spatial degree of freedom of the particle $B$. Specifically, when the value of the bit is $m \in \{0,1\}$, $B$ is set to a superposition of position eigenstates $\ket{\phi_m}_B \coloneq \frac{1}{\sqrt{2}}(\ket{1}_B + (-1)^m\ket{2}_B)$, as illustrated in the upper part of \cref{fig:quantum_two_particles}. This is described by the joint quantum state
\begin{equation}\label{eq:state_A}
    \rho_{BM}^{(A)} = \frac{1}{2} \sum_{m \in \{0,1\}} \ketbra{\phi_m}_B \otimes \ketbra{m}_M.
\end{equation}
Because the states $\ket{\phi_0}_B$ and $\ket{\phi_1}_B$ are orthogonal, the bit $M$ can be perfectly retrieved via a measurement on $B$.

What does this setup look like when viewed instead from the perspective of $B$? Geometrically, if $B$ is at position $x$ relative to $A$, then $A$ is at position $-x$ relative to $B$. In the perspectival approach, this intuition is applied within each branch of the position basis. Accordingly, the  quantum reference frame transformation $T^{A \to B}$ that yields $B$'s perspective from $A$'s is implemented by $T^{A \to B}: \ket{x}_B \mapsto \ket{-x}_A$, see \cref{fig:quantum_two_particles}. Because~$T^{A \to B}$ is an isometry, the joint state of $A$ and $M$ from $B$'s perspective, $\smash{\rho^{(B)}_{A M}}$, also possesses the form specified by~\cref{eq:state_A}. Consequently, the conclusion from above applies analogously: from the perspective of $B$, the bit $M$ can be retrieved via a measurement on~$A$.

\begin{figure}[tbp]
    \centering
    \begin{subfigure}{\textwidth}
        \centering     
        \includegraphics{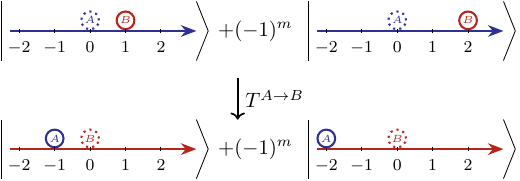}
        \caption{Perspective from particle $A$ on particle $B$ (top) and from particle $B$ on particle $A$ (bottom).}
        \label{fig:quantum_two_particles}
    \end{subfigure}

    \vspace{0.5cm}

    \begin{subfigure}{\textwidth}
        \centering     
        \includegraphics{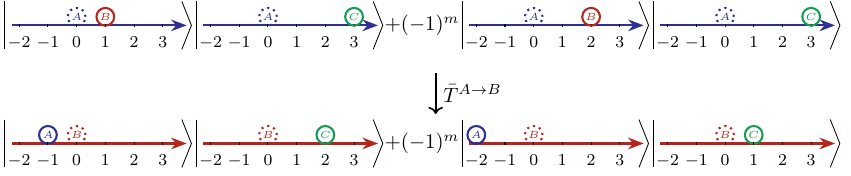}
        \caption{Perspective from particle $A$ (top) and particle $B$ (bottom) with the third particle $C$ included.}
        \label{fig:quantum_three_particles}
    \end{subfigure}
    \caption{\textbf{Quantum Paradox of the Third Particle.} From the perspective of one particle (dotted), the state of other particles (solid) depends on the value $m$ of a bit. From $A$'s perspective, the state of particle $B$ is the same independently of whether the third particle~$C$ is ignored, as in~(\subref{fig:quantum_two_particles}), or included in the description, as in~(\subref{fig:quantum_three_particles}). The paradox arises because, after transforming to $B$'s perspective (vertical arrow), the state of $A$ is different depending on whether $C$ is considered. This difference is operationally relevant: in (\subref{fig:quantum_two_particles}), the value $m$ is retrievable by acting on $A$ only, whereas this is not the case in~(\subref{fig:quantum_three_particles}). }\label{fig:quantum_paradox}
\end{figure} 

Returning to $A$'s perspective, we may add a third particle $C$ to the description, resulting in a joint state of the form
\begin{equation}\label{eq:state_A_ext}
    \rho_{BCM}^{(A)} = \frac{1}{2} \sum_{m \in \{0,1\}} \ketbra{\phi_m}_B \otimes \ketbra{3}_C \otimes \ketbra{m}_M, 
\end{equation}
as illustrated in the upper part of \cref{fig:quantum_three_particles}. Note that the marginal $\rho_{B M}^{({A})}$ is identical to the state defined by~\cref{eq:state_A}, so that $A$'s view of $B$ and $M$ remains unaltered. In particular, from $A$'s perspective, $M$ can still be retrieved by measuring $B$. 

Previously, we concluded that, from $B$'s perspective, $M$ is retrievable via a measurement on $A$. To perform a consistency check, we now investigate whether the same conclusion still holds within the extended description that includes the third particle. With $C$ explicitly in play, the reference frame transformation $\smash{\bar{T}^{A \to B}}$ to $B$'s perspective is given by 
\begin{equation}\label{eq:QRF_trafo}
    \bar{T}^{A \to B}: \quad \ket{x}_B \otimes \ket{x'}_C \ \mapsto \ \ket{-x}_A \otimes \ket{x'-x}_C.
\end{equation}
Analogous to $T^{A \to B}$, this transformation is justified by its geometric interpretation in the position eigenbasis (cf.\ \cref{fig:classical_RF_trafo} below). Applying it to the state in~\eqref{eq:state_A_ext} yields
\begin{equation}
    \rho_{ACM}^{(B)} = \frac{1}{2} \sum_{m \in \{0,1\}} \ketbra{\psi_m}_{AC} \otimes \ketbra{m}_M,
\end{equation}
where $\ket{\psi_m}_{AC} \coloneq \frac{1}{\sqrt{2}}\left(\ket{-1}_A \otimes \ket{2}_C + (-1)^m \ket{-2}_A \otimes \ket{1}_C \right)$ is an entangled state, as illustrated in the lower part of \cref{fig:quantum_three_particles}. To determine whether $M$ can be retrieved from $A$ alone, we trace out $C$, obtaining
\begin{equation}
    \tilde{\rho}_{AM}^{(B)} = \frac{1}{4}\Big(\ketbra{-1}_{A} + \ketbra{-2}_A\Big) \otimes \sum_{m \in \{0,1\}}  \ketbra{m}_M.
\end{equation}
Crucially, $A$ is independent of $M$. Thus, from $B$'s perspective, no information about $M$ can be obtained by measuring $A$, meaning that our consistency check failed!

This specific instance exemplifies the \emph{Paradox of the Third Particle}, which we characterize more generally by the non-commutative diagram in \cref{fig:paradox_diagram}. The paradox remains actively debated~\cite{Observers2026}, despite several proposed resolutions~\cite{Angelo2011,Krumm2021,Krumm2021unpub,Brukner2024,Castro-Ruiz2025}.

To better understand a paradox formulated within quantum theory, it is instructive to ask whether the paradox persists in a classical formulation~\cite{Spekkens_2007,Spekkens2015,Catani_2023,Hausmann_2023,Jones2026}. This question is particularly relevant here, since quantum theory introduces significant conceptual difficulties when multiple perspectives are involved~\cite{Nurgalieva_2020,Hausmann2025fire}. We show that the answer is yes: the Paradox of the Third Particle survives the removal of all its quantum features.

\begin{figure}[tbp]
    \centering
    \includegraphics{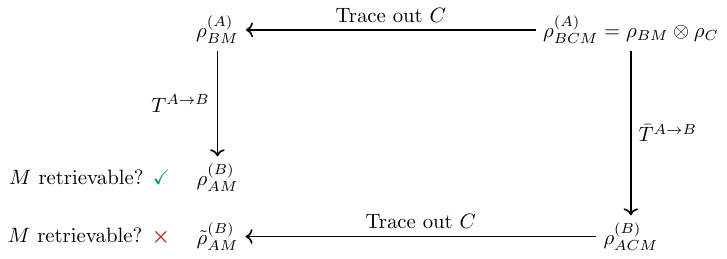}
    \caption{\textbf{Structure of the Paradox of the Third Particle.} From the perspective of particle $A$ (upper part), information $M$ can be retrieved from particle $B$ alone, i.e., without acting on the third particle $C$. An analysis of the same setup after transforming to the perspective of $B$ (lower part), however, yields contradictory answers to whether $M$ can be retrieved without access to~$C$. For concreteness, the diagram is phrased in terms of density operators. Its structure, however, does not depend on this particular representation of states and applies equally to non-quantum theories, such as classical mechanics.}\label{fig:paradox_diagram}
\end{figure}

\section{Reference frame transformations for classical particles in \texorpdfstring{$1D$}{1D}}
Before introducing the classical analogue of the Paradox of the Third Particle, we discuss classical reference frame transformations, taking into account the paradigm that reference frames are physical systems. This goes slightly beyond standard textbook treatments, where reference frames are regarded as abstract idealizations. To mirror the original quantum version of the paradox in our classical analogue, we formulate the underlying reference frame transformations via a set of postulates that retain their meaning outside the framework of quantum theory. While the concrete nature of these postulates renders them somewhat restrictive, we lift these limitations in \cref{sec:general_paradox}.

\begin{figure}
    \centering
    \includegraphics{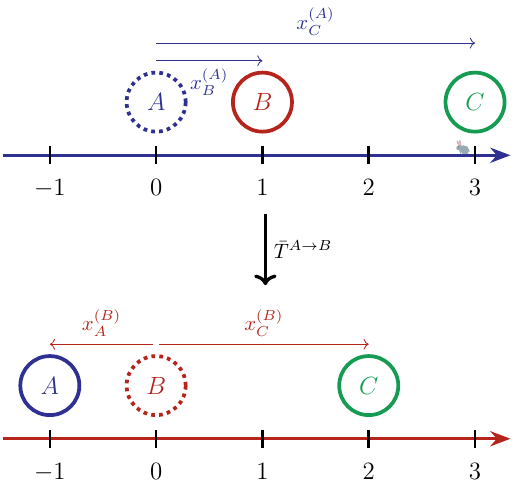}
    \caption{\textbf{Geometric intuition.} The upper part shows the perspective from particle $A$ (dotted) on particles $B$ and $C$ (solid). The lower part shows the perspective from particle $B$ (dotted) on particles $A$ and $C$ (solid). The reference frame transformation $\bar{T}^{A \to B}$ from $A$'s perspective to $B$'s perspective corresponds to a shift of the origin from the location of $A$ to the location of $B$.}
    \label{fig:classical_RF_trafo}
\end{figure}

To characterize the transformation between perspectives, we must first clarify what data is required to specify a perspective. In the quantum case, the perspective of $A$ on~$B$ and additional particles $C_1,\dots, C_n$ is given by a density operator on their joint state space, which contains sufficient information to determine the time evolution. Requiring that the classical analogue possess the same predictive power, a perspective must specify not merely positions, but also momenta. Formally, the classical perspective of $A$ consists of a phase space $\mathcal{P}^{(A)} = \mathbb{R}^{2(n+1)}$, which is endowed with the symplectic form $\Omega^{(A)} = dx^{(A)}_{B} \wedge dp^{(A)}_{B}+ \sum_{i = 1}^n dx^{(A)}_{C_i} \wedge dp^{(A)}_{C_i}$, and similarly for $B$.\footnote{See \cref{app:class_mech} for a review of phase spaces.} 

We are now ready to formulate the postulates governing the transformation from $A$'s perspective to $B$'s. The first postulate captures the geometric intuition underlying a quantum reference frame transformation like \cref{eq:QRF_trafo}, which is illustrated by \cref{fig:classical_RF_trafo}. 

\begin{postulatebox}
    \begin{postulate}[Geometry]\label{pos:geometry}
        If, from $A$'s perspective, $B$ is at position $x_B$ and $C_i$ at position $x_{C_i}$, then, from $B$'s perspective, $A$ is at position~$-x_B$ and $C_i$ at position $\smash{x_{C_i} - x_{B}}$.
    \end{postulate}
\end{postulatebox}

Another key feature of the quantum transformation defined by \cref{eq:QRF_trafo} is that it introduces no phase factors. Our second postulate serves as the classical analogue to this property.

\begin{postulatebox}
    \begin{postulate}[Position and momentum separation]\label{pos:pos_mom_sep}
        The momentum of any particle from $B$'s perspective is a function solely of the particles' momenta from $A$'s perspective. Furthermore, the phase space origin $(0,\dots,0)$ maps to itself.\footnotemark
    \end{postulate}
\end{postulatebox}
\footnotetext{\label{foot:implied}For the position coordinates, this preservation of the origin is already guaranteed by \cref{pos:geometry}.}

Finally, the quantum reference frame transformation \cref{eq:QRF_trafo} is unitary, which may be interpreted as the requirement that, although taking different perspectives, we are describing the same physical setup and possess identical information about it. 

\begin{postulatebox}
    \begin{postulate}[Information preservation]\label{pos:reversibility}
        The transformation from the perspective of $A$ to the perspective of $B$ is a symplectomorphism.
    \end{postulate}
\end{postulatebox}

Together, these three postulates completely characterize the classical reference frame transformation from $A$ to $B$.\footnote{The proofs of this and all subsequent statements are provided in \cref{app:proofs}.}

\begin{resultbox}
    \begin{restatable}[]{lemma}{classicalRF}\label{thm:classicalRF}
        \Cref{pos:geometry,pos:pos_mom_sep,pos:reversibility} uniquely fix the reference frame transformation to be
        \begin{align}
            \begin{split}
                \bar{T}^{A \to B}: \mathcal{P}^{(A)} &\to \mathcal{P}^{(B)} \\
                \left(\begin{psmallmatrix}
        x^{(A)}_B\\
        p^{(A)}_B
        \end{psmallmatrix}, \begin{psmallmatrix}
        x^{(A)}_{C_1}\\
        p^{(A)}_{C_1}
        \end{psmallmatrix}, \dots\right) &\mapsto \left(\begin{psmallmatrix}
        -x^{(A)}_B\\
        -p^{(A)}_B-\sum_i  p^{(A)}_{C_i}
        \end{psmallmatrix}, \begin{psmallmatrix}
        x^{(A)}_{C_1}-x^{(A)}_B\\
        p^{(A)}_{C_1}
        \end{psmallmatrix}, \dots \right).
            \end{split}
        \end{align}
    \end{restatable}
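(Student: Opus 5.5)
By \cref{pos:geometry}, the new position coordinates are fixed: $x^{(B)}_A = -x_B$ and $x^{(B)}_{C_i} = x_{C_i}-x_B$, writing $x_B, p_B, x_{C_i}, p_{C_i}$ for $A$'s coordinates. By \cref{pos:pos_mom_sep}, every momentum in $B$'s perspective is a function of momenta only: $p^{(B)}_A = f_0(p)$ and $p^{(B)}_{C_i} = f_i(p)$, with $p=(p_B,p_{C_1},\dots,p_{C_n})$ and $f_j(0)=0$. The map therefore has block form $(x,p)\mapsto (Lx, f(p))$, where $L$ is the constant linear matrix read off from \cref{pos:geometry}. The plan is to impose the symplectomorphism condition of \cref{pos:reversibility} and show that it determines $f$ uniquely.

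The pulled-back symplectic form is
\begin{equation}
\bar{T}^*\Omega^{(B)} = \sum_{j} d(Lx)_j \wedge df_j(p).
\end{equation}
In matrix language, the Jacobian is block-diagonal, $\mathrm{diag}(L, Df(p))$. Preserving $\sum_j dx_j\wedge dp_j$ is then equivalent to $L^{\mathsf T} Df(p) = \mathbb{1}$, that is, $Df(p) = (L^{\mathsf T})^{-1}$ for every $p$. The absence of $dx\wedge dx$ and $dp\wedge dp$ terms is automatic because of the block structure, so this single equation is the whole condition. Since the right-hand side is constant, $f$ is affine, and $f(0)=0$ makes it linear: $f(p) = (L^{\mathsf T})^{-1}p$. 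Uniqueness follows because $f$ is determined by its constant derivative together with its value at the origin, using connectedness of $\mathbb{R}^{n+1}$.

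It remains to compute $(L^{\mathsf T})^{-1}$. In the ordering $(x_B, x_{C_1},\dots)$, $L$ has first row $(-1,0,\dots,0)$ and rows $i$ equal to $-e_B + e_{C_i}$. Hence $L^{\mathsf T}$ has first row $(-1,-1,\dots,-1)$ and identity entries on the $C$ diagonal. To find $p=(L^{\mathsf T})^{-1}q$, I will solve $L^{\mathsf T}p=q$. This gives $p_{C_i}=q_{C_i}$ and $-p_B-\sum_i p_{C_i} = q_B$, equivalently $q_{B}=-p_B-\sum_i p_{C_i}$. I need to be careful about the direction of the inverse here. The condition is $L^{\mathsf T}Df=\mathbb 1$, so $f(p)=(L^{\mathsf T})^{-1}p$. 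A direct check confirms the claimed formula: $d(-x_B)\wedge d(-p_B-\sum p_{C_i}) + \sum_i d(x_{C_i}-x_B)\wedge dp_{C_i}$ expands to $dx_B\wedge dp_B + \sum_i dx_B\wedge dp_{C_i} + \sum_i dx_{C_i}\wedge dp_{C_i} - \sum_i dx_B\wedge dp_{C_i}$, and the cross terms cancel, leaving $\Omega^{(A)}$.

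The main obstacle is justifying the matrix condition $L^{\mathsf T}Df = \mathbb 1$ cleanly. To do this I will compare the coefficients of $dx_k\wedge dp_l$ in the pullback. That comparison gives $\sum_j L_{jk}\,\partial f_j/\partial p_l = \delta_{kl}$, and this is exactly the required linear system.
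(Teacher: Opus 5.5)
Your proposal is correct and is essentially the paper's argument written in matrix form: the paper likewise uses \cref{pos:geometry,pos:pos_mom_sep} to make the Jacobian block-diagonal, imposes symplecticity only on the $dx\wedge dp$ pairings (its equations for the coefficients $c$ are exactly your $L^{\mathsf T}Df=\mathbbm{1}$), and then integrates using the origin condition. One step reads off the formula in the wrong direction: from $L^{\mathsf T}p=q$ you extract $q=L^{\mathsf T}p$ rather than $p=(L^{\mathsf T})^{-1}q$. This is harmless because $L^{\mathsf T}$ is an involution, $(L^{\mathsf T})^2=\mathbbm{1}$, and your direct pullback check settles the point anyway.
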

\end{resultbox}

The transformation in \cref{thm:classicalRF} is the classical analogue of quantum reference frame transformations like \cref{eq:QRF_trafo} introduced during our discussion of the Paradox of the Third Particle.\footnote{A reader bothered by how this transformation acts on the momenta is referred to the paragraph ``The basis problem'' in \cref{sec:classical_analogue}.} We are thus ready to formulate the classical version of this paradox. 

\section{The classical Paradox of the Third Particle}

We start with the perspective of a particle $A$ on a particle $B$. By analogy with the quantum case, a random bit $M$ is encoded into the momentum of $B$: if the bit takes the value $m$, the momentum of $B$ is likewise $m$ (in appropriate units). This is described by the joint probability distribution\footnote{For the pedantic, this defines a probability measure, with $\delta$ the Dirac measure. We use Dirac measures for simplicity only; in \cref{app:robust} we show that the paradox is robust under small deviations.} over the phase space of particle $B$ and the bit $M$, 
\begin{equation}
    P^{(A)}_{BM}\left(\begin{psmallmatrix}
    x_B\\
    p_B
    \end{psmallmatrix},m\right) = \frac{1}{2} \delta(x_B)\delta(m-p_B),
\end{equation}
which is illustrated in the left part of \cref{fig:classical_two_particles}. Evidently, the bit $M$ can be retrieved by measuring $B$. 

We now switch to $B$’s perspective using \cref{thm:classicalRF}, applying it without including any particles other than $A$ and $B$. We denote this special case by $T^{A \to B}$. The resulting distribution is shown in the right part of \cref{fig:classical_two_particles}. As is evident from the illustration, the bit $M$ can be retrieved by measuring $A$.

Next, we return to $A$'s perspective and include another particle $C$ in the description. We assume this particle is known to be at position $3$, while its momentum is distributed according to $\frac{1}{2K}\chi_{[-K,K]}$, the uniform measure over the interval $[-K,K]$ for some $K \in \mathbb{R}_{> 0}$, leading to the joint distribution
\begin{equation}\label{eq:classical_perspective_A}
    P^{(A)}_{BCM}\left(\begin{psmallmatrix}
    x_B\\
    p_B
    \end{psmallmatrix},\begin{psmallmatrix}
    x_C\\
    p_C
    \end{psmallmatrix},m\right) = \frac{1}{4K} \delta(x_B)\delta(m-p_B)\delta(x_C-3) \chi_{[-K,K]}(p_C),
\end{equation}
which is illustrated in \cref{fig:perspective_A}. Since the reduced distribution of particle~$B$ and the bit~$M$ remains unchanged, $M$ can still be retrieved by measuring $B$.

\begin{figure}[tb]
    \centering
    \begin{subfigure}{\textwidth}
        \centering
        \includegraphics[scale = 0.95]{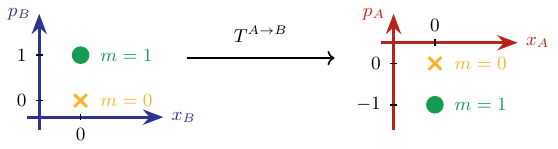}
        \caption{Perspective from particle $A$ on particle $B$ (left) and from particle $B$ on particle $A$ (right).}
        \label{fig:classical_two_particles}
    \end{subfigure}
    \begin{subfigure}{0.45\textwidth}
        \centering
        \includegraphics[scale = 0.95]{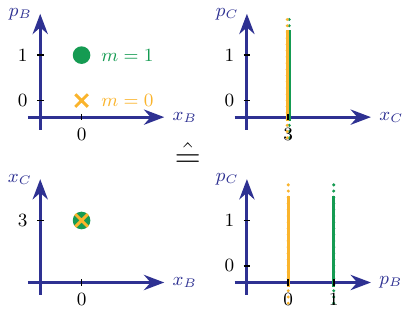}
        \caption{Perspective from particle $A$ with the third particle $C$ included.}
        \label{fig:perspective_A}
    \end{subfigure}
    \hspace{1cm}
   \begin{subfigure}{0.45\textwidth}
        \centering     
        \includegraphics[scale = 0.95]{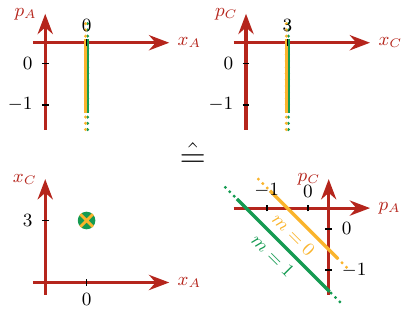}
        \caption{Perspective from particle $B$ with the third particle $C$ included.} 
        \label{fig:perspective_B}
    \end{subfigure}
    \caption{\textbf{The classical Paradox of the Third Particle.} From $A$'s perspective, $B$'s state depends on the value $m$ of the bit $M$, regardless of whether the third particle~$C$ is ignored, as in~(\subref{fig:classical_two_particles}), or included, as in~(\subref{fig:perspective_A}). However, upon transforming to $B$'s perspective, a discrepancy arises: the state of $A$ depends on $m$ if $C$ is ignored when transforming, as on the right-hand side of~(\subref{fig:classical_two_particles}), whereas it is independent of $m$ if $C$ is accounted for, as in~(\subref{fig:perspective_B}).}\label{fig:classical_paradox}
\end{figure} 

We once again switch to $B$’s perspective by applying \cref{thm:classicalRF}, this time explicitly including particle $C$ in the description. The resulting distribution, illustrated in \cref{fig:perspective_B}, is given by
\begin{equation}\label{eq:classical_perspective_B}
    P^{(B)}_{ACM}\left(\begin{psmallmatrix}
    x_{A}\\
    p_{A}
    \end{psmallmatrix},\begin{psmallmatrix}
    x_{C}\\
    p_{C}
    \end{psmallmatrix},m\right) = \frac{1}{4K} \delta(-x_A)\delta(m+p_A+p_C)\delta(x_C-x_A-3)\chi_{[-K,K]}(p_C).
\end{equation}
Notice that marginalizing over particle $C$ results in the reduced distribution
\begin{align}
    \nonumber
    P^{(B)}_{AM}\left(\begin{psmallmatrix}
    x_{A}\\
    p_{A}
    \end{psmallmatrix},m\right) &= \int dp_C dx_C \, \frac{1}{4K} \delta(-x_A)\delta(m+p_A+p_C)\delta(x_C-x_A-3) \chi_{[-K,K]}(p_C) \\ \label{eq:notation}
    &= \frac{1}{4K} \delta(-x_A)\chi_{[-K,K]}(-m-p_A).
\end{align}
We can now perform our consistency check and ask whether, from $B$'s perspective, the bit~$M$ is retrievable by acting solely on $A$. Looking at \cref{eq:notation} for large $K$, the distribution is close to $\frac{1}{4K} \delta(-x_A)\chi_{[-K,K]}(-p_A)$, and hence effectively independent of $M$. Therefore, retrieval is impossible. The consistency check thus fails, just as in the quantum case.

In summary, we have reproduced the non-commutative structure illustrated in \cref{fig:paradox_diagram} within classical mechanics. This establishes the claim in the title of this paper.

\section{Can the Paradox of the Third Particle be avoided?}\label{sec:general_paradox}
The classical version of the Paradox of the Third Particle relies on the specific reference frame transformations determined by \cref{pos:geometry,pos:pos_mom_sep,pos:reversibility}. Given that these postulates are quite restrictive, one may ask whether the paradox can be avoided by relaxing them. In what follows, we establish a no-go theorem demonstrating that this is not the case; rather, the paradox is merely a symptom of a deeper problem.

To keep the theorem's assumptions minimal, we define a \emph{perspective} by three ingredients only: the system $A$ that serves as the physical reference frame, the set $\mathbb{S}$ of systems described from the perspective, as reflected in notation like $\rho^{(A)}_{BC}$ with $\mathbb{S} = \{B,C\}$, and the joint state space $\mathcal{D}$ of these described systems.

\begin{postulatebox}
    \begin{definition}\label{def:perspective}
        A \emph{perspective} is a tuple $\smash{\mathbb{S}_{\mathcal{D}}^{(A)} \coloneq (A, \mathbb{S}, \mathcal{D})}$, where $A$ is a system label, $\mathbb{S}$ a set of system labels such that $A \notin \mathbb{S}$, and $\mathcal{D}$ is a state space. 
    \end{definition}
\end{postulatebox}
 
The choice of state space depends on the underlying physical theory. In quantum mechanics, the state space is given by the density matrices over a Hilbert space,\footnote{In the following, we assume that all Hilbert spaces are separable.} defining a \emph{quantum perspective}. Conversely, in classical mechanics, the state space is given by probability distributions over a phase space, defining a \emph{classical perspective}.

The second ingredient needed to formulate our no-go theorem is the notion of a reference frame transformation, which is defined as a map between the state spaces of different perspectives.

\begin{postulatebox}
    \begin{definition}\label{def:trafo}
        A \emph{reference frame transformation} from a perspective $\smash{\mathbb{S}^{(A)}_{\mathcal{D}}}$ to a perspective~$\smash{\bar{\mathbb{S}}^{(B)}_{\bar{\mathcal{D}}}}$ is a map $T: \mathcal{D} \to \bar{\mathcal{D}}$. 
    \end{definition}
\end{postulatebox}

The literature mostly considers transformations between perspectives~$\mathbb{S}^{(A)}_{\mathcal{D}}$ and $\bar{\mathbb{S}}^{(B)}_{\bar{\mathcal{D}}}$ that are \emph{maximally overlapping}, meaning the symmetric difference satisfies $\mathbb{S} \triangle \bar{\mathbb{S}} = \{A, B\}$. This ensures that each frame views the other as a physical system, while the two perspectives otherwise describe the same set of systems.

\begin{figure}
    \centering
    \includegraphics{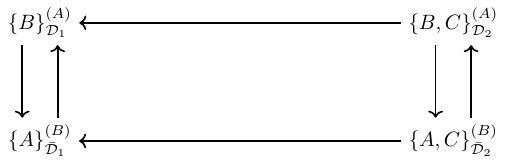}
    \caption{\textbf{Atlas of perspectives.} An atlas of perspectives is a directed graph whose nodes represent individual perspectives and whose edges define transformations from one to the other. This figure illustrates a subgraph where the Paradox of the Third Particle arises. Vertical edges denote reference frame transformations between $A$'s and $B$'s perspectives. Horizontal edges correspond to removing a system from the description without changing the reference.}
    \label{fig:no_go_framework}
\end{figure}

An \emph{atlas of perspectives} is a directed graph whose nodes are perspectives and whose edges are reference frame transformations between their state spaces, as illustrated in \cref{fig:no_go_framework}. An obvious consistency requirement we impose is that any two paths from a perspective $\mathbb{S}^{(A)}_{\mathcal{D}}$ to a perspective $\bar{\mathbb{S}}^{(B)}_{\bar{\mathcal{D}}}$ yield identical composed maps. We call an atlas \emph{non-trivial} if it contains \cref{fig:no_go_framework} as a subgraph.

Within a perspective, the described systems enter only as abstract labels, not yet tied to any part of the state space. To give these labels meaning, we introduce a \emph{subsystem partitioning}, which specifies how the state space decomposes into the labelled subsystems. For a quantum perspective, this corresponds to endowing the underlying Hilbert space with a tensor product structure whose tensor factors carry the subsystem labels. Similarly, for a classical perspective, it means equipping the underlying phase space $\mathcal{P}$ with a Cartesian product structure, whose factors carry the corresponding labels (see \cref{app:class_mech}). For example, if there are two subsystems $B$ and $C$, we decompose $\mathcal{P}$ into the Cartesian product of two symplectic manifolds $\mathcal{P} = \mathcal{P}_B \times \mathcal{P}_C$. 

We can now state our no-go theorem, which holds in both a classical and a quantum version.
\begin{resultbox}
    \begin{restatable}[]{theorem}{nogo}\label{thm:nogo}
        Consider a non-trivial atlas of classical (quantum) perspectives together with a subsystem partitioning of every perspective. Then, at least one of the following properties must be violated:
        \begin{itemize}[itemsep=0pt, topsep=3pt]
            \item {\textbf{Information preservation:}} Reference frame transformations between any two maximally overlapping perspectives are induced by symplectomorphisms (unitaries) between the underlying phase spaces (Hilbert spaces).
            \item {\textbf{Relationality:}} Reference frame transformations depend non-trivially on the target frame's state, i.e., do not factorize with respect to the subsystem partitioning.
            \item {\textbf{Decomposability:}} The removal of a system from a perspective corresponds to a trace with respect to the subsystem partitioning. 
        \end{itemize}
    \end{restatable}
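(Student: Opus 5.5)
We argue by contradiction: assume all three properties hold on a non-trivial atlas and show that the diagram in \cref{fig:no_go_framework} cannot commute. Fix the subgraph with nodes $\{B,C\}^{(A)}_{\mathcal{D}}$, $\{A,C\}^{(B)}_{\bar{\mathcal{D}}}$, $\{B\}^{(A)}_{\mathcal{D}'}$ and $\{A\}^{(B)}_{\bar{\mathcal{D}}'}$. The vertical edges $\bar T$ and $T$ connect maximally overlapping perspectives, so by information preservation they are induced by symplectomorphisms (unitaries). For concreteness write these as $\bar\Phi:\mathcal{P}_B\times\mathcal{P}_C\to\mathcal{P}_A\times\mathcal{P}_C$ and $\Phi:\mathcal{P}_B\to\mathcal{P}_A$ (respectively $\bar U:\mathcal{H}_B\otimes\mathcal{H}_C\to\mathcal{H}_A\otimes\mathcal{H}_C$ and $U:\mathcal{H}_B\to\mathcal{H}_A$). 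The horizontal edges are marginalizations (partial traces) over $C$, by decomposability. Path-independence then gives the identity
\[
\operatorname{Tr}_C\circ\, \bar T=T\circ\operatorname{Tr}_C
\]
on all states of $\mathcal{D}$.

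The central step is to show that this identity forces $\bar T$ to factorize, which contradicts relationality. In the quantum case, I would take product inputs $\rho_B\otimes\sigma_C$. The right-hand side is $U\rho_BU^\dagger$, which is independent of $\sigma_C$. Next, feed in $\rho_B=\ketbra{\psi}$ pure. Then $\operatorname{Tr}_C[\bar U(\ketbra{\psi}\otimes\ketbra{\phi})\bar U^\dagger]$ is the pure state $U\ketbra{\psi}U^\dagger$. A pure reduced state implies $\bar U(\ket\psi\otimes\ket\phi)=U\ket\psi\otimes\ket{\chi(\psi,\phi)}$. A standard linearity argument then applies: superpose over $\psi$ with $\phi$ fixed, and over $\phi$ with $\psi$ fixed. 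It shows that $\chi$ is independent of $\psi$ up to a phase that must be constant, and that $\phi\mapsto\ket{\chi(\phi)}$ is linear and isometric. Hence $\bar U=U\otimes V$ for some unitary $V$ on $C$, so $\bar U$ factorizes with respect to the partitioning.

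In the classical case, Dirac measures play the role of pure states. The pushforward of $\delta_{(b,c)}$ under $\bar\Phi$ must have $\mathcal{P}_A$-marginal $\delta_{\Phi(b)}$, so $\mathrm{pr}_A\circ\bar\Phi(b,c)=\Phi(b)$ for all $c$. This says the first component of $\bar\Phi$ depends only on $b$. Write $\bar\Phi(b,c)=(\Phi(b),g(b,c))$. The symplectic condition then has to be used to remove the $b$-dependence of $g$. Pulling back $\Omega_A+\Omega_C$ gives $\Phi^*\Omega_A+g^*\Omega_C$, and this must equal $\Omega_B+\Omega_C$. Since $\Phi$ is itself a symplectomorphism, $g^*\Omega_C=\Omega_C$. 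Writing $g^*\Omega_C$ in components, its mixed $\mathcal{P}_B$–$\mathcal{P}_C$ terms and its pure $\mathcal{P}_B$ terms must vanish. So for each $b$, $g(b,\cdot)$ is a symplectomorphism of $\mathcal{P}_C$, and the image of $\partial_b g$ is $\Omega_C$-orthogonal to the image of $\partial_c g$, which is all of $T\mathcal{P}_C$. Nondegeneracy then gives $\partial_b g=0$. Hence $\bar\Phi=\Phi\times\psi$ factorizes.

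I expect the main obstacle to be making the quantum linearity argument airtight; in particular, ruling out a $\psi$-dependent phase that is not a global phase. I would handle it by expanding $\bar U((\alpha\ket{\psi_1}+\beta\ket{\psi_2})\otimes\ket\phi)$ and using that a product decomposition with linearly independent $U\ket{\psi_i}$ forces $\ket{\chi(\psi_1,\phi)}\propto\ket{\chi(\psi_2,\phi)}$. A secondary point is to check that relationality is meant as non-factorization of exactly $\bar T$, the vertical edge that includes the third particle. The factorized form derived above, in which the description of $C$ is unaffected by $B$'s state, then directly violates it, as the momentum-mixing map of \cref{thm:classicalRF} illustrates.
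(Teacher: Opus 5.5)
Your proposal is correct and follows essentially the paper's route. Commutativity of the square with partial traces, together with pure (Dirac) inputs, forces a product output; linearity (quantum) or symplecticity plus nondegeneracy (classical) then removes the $B$-dependence of the $C$-output, so $\bar T$ factorizes. The differences are cosmetic: the paper first shows via atlas consistency that $T^{q_1\to p_1}$ inverts $T^{p_1\to q_1}$ and analyses the composite acting trivially on the first factor (in Darboux charts classically), whereas you use $U^{-1}$ directly and a coordinate-free pullback. Your phase worry disappears if you define $\ket{\chi(\psi,\phi)}$ as $(\bra{\psi}U^\dagger\otimes\mathbbm{1}_C)\,\bar U(\ket{\psi}\otimes\ket{\phi})$, as the paper does for its $V$, and you should invoke connectedness of $\mathcal{P}_B$ when passing from $\partial_b g=0$ to $g(b,c)=\psi(c)$.
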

\end{resultbox}

Importantly, the no-go theorem applies to an arbitrary choice of subsystem partitioning within each perspective. The non-trivial requirement is that both relationality and decomposability refer to the same choice. 

The no-go theorem tells us that the Paradox of the Third Particle has no easy resolution. To see this, note that any resolution of the paradox turns the diagram in \cref{fig:paradox_diagram} into a commutative one, making it a non-trivial atlas of perspectives. Therefore, the resolution must violate one of the three properties stated in \cref{thm:nogo}. 

\paragraph{Information preservation.} This condition is satisfied by most approaches to quantum reference frames, including the perspectival \cite{Giacomini2019,delaHamette2020}, as well as what are known as the perspective-neutral \cite{Vanrietvelde2020,Hoehn2021,Hamette2021} and the extra-particle approaches \cite{Castro-Ruiz2025,Garmier2025}. It is also a necessary condition for extended symmetry principles \cite{Hardy2018,Zych2018,Giacomini2019,Giacomini2020equivalence,Hardy2020,Giacomini2022,de_la_Hamette_2023,Kabel_2024}, which postulate that physical laws must be invariant under quantum reference frame transformations. However, frameworks exist where reference frame changes are not information preserving \cite{Palmer2014,Carette2025}.

While information preservation appears self-evident when treating reference frames as abstract coordinate systems, this intuition may fail when they are modelled as physical systems. The following example illustrates this.
\begin{resultbox}
    \begin{example}\label{ex:different_subsystems}
        Consider an external reference frame, such as the Earth's surface. From this external perspective, three particles $A$, $B$, and $C$ possess positions and canonically conjugate momenta $x_A,p_A,x_B,p_B,x_C,p_C$. Suppose the perspectives of~$A$ and $B$ are defined by
        \begin{align}\label{eq:extra_particle}
            \begin{split}
                x_B^{(A)} &= x_B - x_A ,\quad p_B^{(A)} = p_B \\
                x_C^{(A)} &= x_C - x_A, \quad p_C^{(A)} = p_C
            \end{split}
            \begin{split}
                x_A^{(B)} &= x_A - x_B, \quad p_A^{(B)} = p_A \\
                x_C^{(B)} &= x_C - x_B, \quad p_C^{(B)} = p_C.
            \end{split}
        \end{align}
        We can interpret momenta as generators of spatial translations. For instance,~$p_B^{(A)}$ shifts the position of particle $B$ from the perspective of $A$. Because $p_B^{(A)}$ Poisson-commutes with $x_A$, moving particle $B$ leaves $A$ fixed relative to the Earth's surface. Conversely, when $B$ moves $A$, it is $B$ that remains stationary. The key feature of this example is that $A$'s perspective includes $p_B$, i.e., $B$ can move relative to the Earth's surface, whereas this is not the case when looking at the world from $B$'s perspective. Consequently, the perspectives of $A$ and $B$ describe different subsystems of the external perspective, implying that the transformation between them cannot be information-preserving. 
    \end{example}
\end{resultbox}

\paragraph{Relationality.} The purpose of a reference frame transformation is to translate a description relative to $A$ into a description relative to $B$. This relativity implies that the resulting description of any system~$C$ relative to $B$ must depend on the state of $B$ relative to $A$. This dependence is precisely what relationality formalizes. We are unaware of any approach in the literature that violates this property.

\paragraph{Decomposability.}
Consider the task of predicting the orbit of the planets around the Sun. One could either model just the solar system or also include the distant black hole Sagittarius A${}^{*}$ into the description. Because the interaction with the black hole is irrelevant to the orbit of the planets around the Sun, both approaches yield the same result. If this were not the case, we could only do physics by describing the entire universe, which is obviously impossible. This illustrates a fundamental requirement for any physical theory: it must be applicable to subsystems of the universe. In particular, there should be a consistent way to remove an irrelevant system from a description.

The resolutions we present in the following either retain this system removal rule while giving up the requirement that decomposability and relationality refer to the same subsystem partitioning, or modify the removal rule itself.\footnote{Modified removal operations may be benchmarked against two properties of the trace-based rule: upon removing a system $C$ from $BC$, it maps every state on $BC$ to a state on $B$, and every possible state on $B$ arises as such an image.}
A first example is the resolution proposed in~\cite{Angelo2011}, where the Paradox of the Third Particle was introduced.

\begin{resolution}{Original formulation \cite{Angelo2011}}{original}
    Consider an external perspective as in \cref{ex:different_subsystems} and assume the centre of mass of the total system is conserved. We simplify the argument of \cite{Angelo2011} by assuming that $A$ and $B$ are of equal mass and much heavier than~$C$. Motivated by this physical picture, the subsystem partitionings for the perspectives of $A$ and $B$ are defined as: 
    \begin{align}
    \begin{split}
    x_B^{(A)} &= x_B - x_A, \quad p_B^{(A)} = \frac{1}{2} (p_B-p_A) \\
    x_C^{(A)} &= x_C - x_A, \quad p_C^{(A)} = p_C
    \end{split}
    \begin{split}
    x_A^{(B)} &= x_A - x_B, \quad p_A^{(B)} = \frac{1}{2} (p_A - p_B) \\
    x_C^{(B)} &= x_C - x_B, \quad p_C^{(B)} = p_C
    \end{split}
    \end{align}
    The lack of commutativity between $p_B^{(A)}$ and $x_C^{(A)}$ is identified as the root of the Paradox of the Third Particle. Physically, the non-commutativity arises because the momentum~$p_B^{(A)}$ effectively shifts the position of particle $A$, thereby altering $\smash{x_C^{(A)}} = x_C - x_A$. Consequently, particles $B$ and $C$ are not independent subsystems from $A$'s perspective. In the terminology of \cref{thm:nogo}, this resolution gives up decomposability with respect to the above subsystem partitionings.
\end{resolution}

We now present a second example for a resolution, which applies within the perspectival approach. It follows the ideas proposed in~\cite{Brukner2024,Krumm2021unpub}.

\begin{resolution}{Perspectival approach}{}
   The argument begins by noting that, from the perspective of $A$, for any observable $O_B$ on $B$, $\smash{\trace(\rho^{(A)}_{B}O_{B} ) = \trace(\rho^{(A)}_{BC}O_{B}\otimes \mathbbm{1}_C)}$.
    Applying the reference frame transformation to the state and the observables on both sides of the equation yields 
    \begin{equation}
        \tr(\rho^{(B)}_{A} T^{A \to B} O_{B} (T^{A \to B})^{\dagger}) = \tr(\rho^{(B)}_{AC} \bar{T}^{A \to B} (O_{B}\otimes \mathbbm{1}_C) (\bar{T}^{A \to B})^{\dagger}).
    \end{equation}
    Thus, from $B$'s perspective, the observable $T^{A \to B} O_{B} (T^{A \to B})^{\dagger}$ on $A$ has the same expectation value as $\bar{T}^{A \to B} (O_{B}\otimes \mathbbm{1}_C) (\bar{T}^{A \to B})^{\dagger}$.
    This motivates the definition of the subsystem~$A$ of the total system $AC$ as the one induced by the set of all observables of the form $\bar{T}^{A \to B} (O_{B}\otimes \mathbbm{1}_C) (\bar{T}^{A \to B})^{\dagger}$. While decomposability holds with respect to such a subsystem partitioning, relationality fails, because $\bar{T}^{A \to B}$ factorizes relative to it. This problem may be circumvented by giving up the requirement that relationality and decomposability refer to the same subsystem partitioning.
\end{resolution}

The perspective-neutral approach to quantum reference frames~\cite{Vanrietvelde2020,Hamette2021} offers an alternative resolution that maintains information preservation by modifying decomposability.

\begin{resolution}{Perspective-neutral approach \cite{Krumm2021}}{}
    This resolution proposes replacing the trace with an alternative operation called the \emph{relational trace}. Our no-go theorem implies that the relational trace cannot share all properties of the standard trace. Indeed, it is not trace-preserving. This challenges the interpretation of probabilities,\textsuperscript{\ref{foot:probabilities}} although yielding the correct expectation values for relational observables. Furthermore, as tracing a system $C_1$ and subsequently another system $C_2$ does not result in the same state as when both~$C_1$ and~$C_2$ are traced out simultaneously, the relational trace does not yield a consistent atlas of perspectives. 
\end{resolution}

\refstepcounter{footnote}\label{foot:probabilities}\footnotetext{Consider our $1$-dimensional setup with particles $A$, $B$, and $C$, which we assume to be at different positions. Provided that the density operator including $C$ is normalized, \cite{Krumm2021} gives results compatible with $P(x_A \leq x_B \land x_A \leq x_C) + P(x_A \leq x_B \land x_A \geq x_C) + P(x_A \geq x_B \land x_A \leq x_C) + P(x_A \geq x_B \land x_A \geq x_C) = 1$. However, if we apply the relational trace to remove $C$ from our description then, due to the failure of trace preservation, a calculation of $P(x_A \leq x_B)$ and $P(x_A \geq x_B)$ yields values that do not generally sum up to $1$.}

Lastly, we consider the extra-particle approach~\cite{Castro-Ruiz2025,Garmier2025}.

\begin{resolution}{Extra-particle approach \cite{Castro-Ruiz2025}}{}
    In this approach, the perspective of a reference frame on another system is defined by a procedure that starts from an external perspective, and, crucially, is independent of what other systems are part of the description. However, to define information-preserving reference frame transformations, one needs to enlarge each perspective with another system---the ``extra particle''. The state of this extra particle depends on what systems are part of the description. Therefore, any removal operation that leads to a consistent atlas of perspectives must act on the extra particle. In particular, the removal operation cannot correspond to a trace, which would be required by our decomposability assumption.
\end{resolution}

\section{What quantum reference frame phenomena have a classical analogue?}\label{sec:classical_analogue}
In this section, we show that several phenomena commonly attributed to quantum reference frames, besides the Paradox of the Third Particle, also arise within classical mechanics.

\paragraph{Subsystem relativity.}
One may ask whether the subsystem partitioning changes if one transforms from one reference frame to another. This question was studied in~\cite{Ali_Ahmad_2022} for the perspective-neutral approach. For concreteness, take the perspective of a particle $A$ that describes two particles, $B$ and~$C$, and let $\mathcal{A}_C^{(A)}$ be the algebra of operators on the system~$C$, i.e., operators of the form $\smash{\mathbbm{1}^{(A)}_B \otimes O^{(A)}_C}$. The subsystem partitioning is said to be \emph{relative} if 
\begin{equation} \label{eq:subsystemrelativity}
    \bar{T}^{A \to B}\mathcal{A}_C^{(A)}(\bar{T}^{A \to B})^{\dagger}\neq \mathcal{A}_C^{(B)}.
\end{equation}
This holds, in particular, for quantum reference frame transformations in the perspective-neutral approach~\cite{Ali_Ahmad_2022}.  

In the classical case, the operators on $C$ are functions $f \in \mathcal{C}^{\infty}(\mathcal{P}_B^{(A)} \times \mathcal{P}_C^{(A)})$ of the form\footnote{The classical form is close to the quantum one if one notices that such $f$ can also be written as $f = \mathbbm{1} \times g$, with $\mathbbm{1}$ the function that is $1$ everywhere on $\mathcal{P}_B^{(A)}$.} $f(z_B,z_C) = g(z_C)$, where $\smash{z_B \in \mathcal{P}_B^{(A)}}$ and $\smash{z_C \in \mathcal{P}_C^{(A)}}$.
Furthermore, \cref{eq:subsystemrelativity} translates to
\begin{equation}
    \{f\circ (\bar{T}^{A \to B})^{-1}: f \in \mathcal{A}_C^{(A)}\} \neq \mathcal{A}_C^{(B)}.
\end{equation}

The following theorem shows that subsystem relativity is a generic feature.\footnote{Note that relativity of subsystems poses a challenge for information theory and applications like cryptography across perspectives. For example, whether a key is secure against an adversary with access to a system $E$ could depend on the perspective, as what constitutes the subsystem~$E$ is perspective dependent.}
\begin{resultbox}
    \begin{restatable}[]{theorem}{subsysrelativity}\label{thm:subsysrelativity}
        In both quantum theory and classical mechanics, if information preservation holds, then for any subsystem partitioning:
        \begin{center}
            relationality of reference frame transformations

            $\iff$ 

            relativity of subsystem partitioning
        \end{center} 
    \end{restatable}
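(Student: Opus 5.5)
We read the statement for a transformation $\bar T \coloneq \bar T^{A\to B}$ between maximally overlapping perspectives $\{B\}\cup\mathbb{S}'$ (from $A$) and $\{A\}\cup\mathbb{S}'$ (from $B$), with information preservation in force. We prove both implications in contrapositive form, using the following reading of relationality. $\bar T$ fails to be relational exactly when it factorizes with respect to the subsystem partitionings, that is,
\[
\bar T = T_{B\to A}\otimes T_{C}
\]
(quantum, with $T_{B\to A}$ and $T_C$ unitaries), or
\[
\bar T = T_{B\to A}\times T_{C}
\]
(classical, with $T_{B\to A}$ and $T_C$ symplectomorphisms). Here $C$ stands for the remaining described systems. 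In this case the description of $C$ relative to $B$ does not depend on the state of the frame. Relativity of subsystems means $\bar T \mathcal{A}_C^{(A)}\bar T^\dagger\neq\mathcal{A}_C^{(B)}$, or in the classical case $\{f\circ\bar T^{-1}\}\neq\mathcal{A}_C^{(B)}$. We treat the case of several $C_i$ by grouping them into one system $C$, and we note that the same argument can be applied to each subsystem factor separately.

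First implication: no relationality $\Rightarrow$ no relativity. This direction is direct. In the quantum case,
\[
\bar T(\mathbbm{1}_B\otimes O_C)\bar T^\dagger=\mathbbm{1}_A\otimes T_C O_C T_C^\dagger .
\]
Since $T_C$ is unitary, $O_C\mapsto T_CO_CT_C^\dagger$ is a bijection of the operators on $C$, so the image of the algebra is exactly $\mathcal{A}_C^{(B)}$. In the classical case,
\[
(\mathbbm{1}\times g)\circ(T_{B\to A}\times T_C)^{-1}=\mathbbm{1}\times(g\circ T_C^{-1}),
\]
and $g\mapsto g\circ T_C^{-1}$ is a bijection of $\mathcal{C}^\infty(\mathcal{P}_C)$. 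Hence the two algebras coincide.

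Second implication: no relativity $\Rightarrow$ no relationality. This is the substantive step. Assume $\bar T\mathcal{A}_C^{(A)}\bar T^\dagger=\mathcal{A}_C^{(B)}$. In the quantum case, $\operatorname{Ad}_{\bar T}$ is a $*$-isomorphism of $B(\mathcal{H}_B\otimes\mathcal{H}_C)$ onto $B(\mathcal{H}_A\otimes\mathcal{H}_C)$, and it maps the factor $\mathbbm{1}\otimes B(\mathcal{H}_C)$ onto $\mathbbm{1}\otimes B(\mathcal{H}_C)$. Taking commutants, it therefore also maps $B(\mathcal{H}_B)\otimes\mathbbm{1}$ onto $B(\mathcal{H}_A)\otimes\mathbbm{1}$. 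The restricted map $O_C\mapsto$ (its image on $C$) is an automorphism of the type I factor $B(\mathcal{H}_C)$, so it is implemented by a unitary $V_C$; likewise the $B$-part is implemented by a unitary $U$ from $\mathcal{H}_B$ to $\mathcal{H}_A$. The operator $\bar T(U\otimes V_C)^\dagger$ then commutes with all of $B(\mathcal{H}_A\otimes\mathcal{H}_C)$, so it is a phase. Thus $\bar T$ factorizes, up to a global phase, which does not matter. Separability of the Hilbert spaces guarantees that the automorphisms are inner (Wigner/Kadison).

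In the classical case, assume the pullback algebra $\{f\circ\bar T^{-1}\}$ of functions of $z_C$ equals the algebra of functions of $\bar z_C$. Applying this to coordinate functions shows that the $C$-component of $\bar T$ depends only on $z_C$; call it $\phi_C(z_C)$. The pullback is a bijection between $\mathcal{C}^\infty(\mathcal{P}_C)$ on the two sides, so $\phi_C$ is a diffeomorphism. The next step is to show that the $A$-component of $\bar T$ is independent of $z_C$. Here we use symplecticity: the Poisson bracket of any function of $\bar z_C$ with any function of $\bar z_A$ vanishes. Hence the functions of $\bar z_A$, pulled back, lie in the Poisson commutant of the functions of $z_C$, which is exactly the set of functions of $z_B$. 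Therefore $\bar T=\phi_B\times\phi_C$. Restricting the symplectic form then shows that each factor is itself a symplectomorphism.

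The main obstacle is this second direction. The delicate points are the commutant/Poisson-commutant arguments: the algebra-level statement has to be strengthened to a factorization of the map itself. Specifically, one must show that the $B$-part is also preserved, and that the global-phase freedom does not spoil the conclusion. To settle the quantum case we would invoke the bicommutant theorem for type I factors; for the classical case, the fact that the Poisson commutant of $\mathcal{C}^\infty(\mathcal{P}_C)$ inside $\mathcal{C}^\infty(\mathcal{P}_B\times\mathcal{P}_C)$ is $\mathcal{C}^\infty(\mathcal{P}_B)$. The latter follows by checking brackets against the coordinate functions of $C$.
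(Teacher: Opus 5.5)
Your proof is correct. In the classical half you follow essentially the paper's route: symplecticity transports the vanishing brackets between the $A$- and $C$-algebras. The Poisson-commutant lemma (proved by bracketing with bump-function-localized coordinates of $C$) then places the pulled-back $A$-functions in $\mathcal{A}_B$. Localized coordinate functions plus connectedness give the variable separation. You differ in small ways. The paper works with $\bar T^{B\to A}$, showing that $\Pi_B^{(A)}\circ\bar T^{B\to A}$ and $\Pi_C^{(A)}\circ\bar T^{B\to A}$ each depend on one factor only. You work with $\bar T^{A\to B}$ and use the inclusion $\mathcal{A}_C^{(B)}\subseteq(\bar T^{A\to B})_*\mathcal{A}_C^{(A)}$. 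You also check that the two factors are symplectomorphisms. The paper's definition of factorization requires this, but its proof leaves it implicit.

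The quantum half is where you genuinely diverge. The paper first obtains $\bar T^{B\to A}\mathcal{A}_A^{(B)}(\bar T^{B\to A})^\dagger=\mathcal{A}_B^{(A)}$ as two commutant inclusions. It then stays elementary: it writes the conjugated matrix units as $X_A^{i\alpha}\otimes\mathbbm{1}_C$ and $\mathbbm{1}_A\otimes Y_C^{k\gamma}$, and defines the channels $\mathcal{E},\mathcal{E}_C$ by appending a fixed state and tracing out. It compares matrix elements on product operators and extends by trace-norm density. Its conclusion is only the channel-level identity $\bar T^{B\to A}(\cdot)(\bar T^{B\to A})^\dagger=\mathcal{E}\otimes\mathcal{E}_C$.

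You instead get equality of the $B$-algebras in one step from spatiality of $\operatorname{Ad}_{\bar T}$. You then invoke the theorem that $*$-isomorphisms between full operator algebras are unitarily implemented, which yields $\bar T=e^{i\theta}\,U\otimes V_C$ directly. That is a factorization into unitaries, slightly stronger than the paper's endpoint, which needs a further short argument to extract unitaries. The price is heavier machinery, plus one step you should make explicit. Agreement of $\operatorname{Ad}_{\bar T}$ and $\operatorname{Ad}_{U\otimes V_C}$ on $X\otimes Y=(X\otimes\mathbbm{1})(\mathbbm{1}\otimes Y)$ extends to all bounded operators only because both maps are weak-operator continuous and the algebraic tensor product is weakly dense. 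Separability, incidentally, plays no role in that innerness theorem.
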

\end{resultbox}
Together with \cref{thm:nogo}, this result implies that, assuming information preservation, if the subsystem partitioning is relative then decomposability does not hold. 

\paragraph{Relativity of correlations.} Whether two subsystems are correlated generally depends on the quantum reference frame from which one describes them~\cite{Giacomini2019}. This feature is an immediate consequence of the relativity of the subsystem partitioning. It thus follows from the above that this feature also occurs classically. The relativity of correlations is also apparent in the setup we considered for our classical version of the Paradox of the Third Particle. There, the system $C$ was uncorrelated from $A$'s perspective (eq.~\cref{eq:classical_perspective_A}), but correlated with $A$ from $B$'s perspective (eq.~\cref{eq:classical_perspective_B}). 

\paragraph{The basis problem.}
The reference frame transformation of \cref{thm:classicalRF} is only relative in terms of position, not momentum, contrary to the intuition that momentum should be relative as well. In other words, the particles serve as reference frames only for translations, rather than both translations and boosts. However, \cref{pos:geometry,pos:pos_mom_sep,pos:reversibility} should also be satisfied by reference frame transformations that include boosts. Since the transformation provided by \cref{thm:classicalRF} is unique given these postulates, one cannot impose relativity in momentum without giving up relativity in position. This dilemma is the ``basis problem''. It occurs in the perspectival approach, both classically and quantumly, as recognized already in~\cite{Giacomini2019}.

\paragraph{A single particle is not an ideal reference frame for translations and boosts.} In the literature, a particle is said to \emph{serve as an ideal\footnote{Sometimes the term \emph{perfect} is used instead of \emph{ideal}.} reference frame} for a group $G$ if it can perfectly encode an element of $G$ \cite[Equation (4.24)]{Bartlett2007}. The motivation for this definition is that, according to quantum theory, an ideal frame allows one to construct a unitary that applies different unitaries controlled on the encoded group element. In particular, this is possible if the system that serves as a reference has the Hilbert space $L^2(G)$.

Somewhat surprisingly, in classical mechanics, this connection between encoding and symplectic controlled operations does not hold. To see this, we first note that, in contrast to a quantum particle, which is subject to uncertainty relations, a classical particle serves as an ideal reference frame for the group formed by translations and boosts, according to the above definition. Assume now, by contradiction, that it is possible to construct a symplectic transformation that applies different symplectic transformations controlled on the encoded group element. We could then construct a symplectic reference frame transformation from $A$'s to $B$'s perspective that shifts the position and momentum of the particles $C_i$ depending on the position and momentum of the particle $B$ from $A$'s perspective. We would thus have achieved a reference frame transformation into $B$'s perspective for both translations and boosts. This is impossible, as we noted in the above discussion of the basis problem. 

This argument shows that, classically, the above definition of ideality for reference frames is unsatisfactory. Instead, one should define \emph{ideality} directly via the ability to construct symplectic controlled operations. Then, according to the above argument,  a classical particle, like a quantum particle, is not an ideal reference frame for both translations and boosts. One may then ask which phase space a system must have in order to serve as an ideal reference frame for a Lie group $G$ in this sense. One can show that the cotangent bundle\footnote{See, for example, \cite[p.~63]{Libermann1987} for details on how to construct a symplectic structure on the cotangent bundle of a manifold.} $\mathcal{P} = G \times \mathfrak{g}^{*}$, where $\mathfrak{g}^{*}$ is the dual of the Lie algebra $\mathfrak{g}$ of the group $G$, provides such a phase space. This is the classical analogue of $L^2(G)$. 

\section{Conclusions}
Our result---that the Paradox of the Third Particle occurs classically---demonstrates that it is not an effect specific to quantum theory, but rather a general symptom of treating reference frames as physical systems. This can be understood in information-theoretic terms: a reference frame transformation satisfying all conditions of \cref{thm:nogo} would act as an ``information diode'', allowing information to flow from the target reference frame $B$ to system $C$ while leaving $B$ unaffected.\footnote{\label{ftn:classicaldisturbance}How is this compatible with the claim that classically a system can be measured without disturbing it? The answer is that this is only possible for particular states of the measurement apparatus. For other states, there is disturbance.} However, in both classical and quantum theory, such a diode is impossible: any non-trivial information-preserving interaction inevitably causes bidirectional information flow.\footnote{\Cref{thm:nogo} can be read as a formalization of precisely this statement. Relationality is a necessary condition for information being extracted from $B$. Furthermore, decomposability, together with the atlas being consistent, formalizes the requirement that no information flows from $C$ to $B$.}

This observation is also relevant when one operates on a system relative to a physical reference frame---for instance, when measuring the position of a particle relative to another particle. To analyse the effect of such an operation on the frame itself, we may describe both the system and the frame from the perspective of an additional, external frame, as in \cref{ex:different_subsystems}. From this external perspective, the operation couples the system to the frame and therefore entails a back-action on it: in general, the state of the frame changes, both classically and quantumly (cf.~\cref{ftn:classicaldisturbance}). This reflects a fundamental difference between physical reference frames and abstract coordinate systems, which by construction have no dynamical degrees of freedom.

Regarding reference frames as physical systems poses an additional conceptual challenge: a change of perspective alters the set of described systems, effectively shifting the ``Heisenberg cut'', as illustrated in \cref{fig:Heisenberg_cuts}. Reasoning that combines descriptions relative to different Heisenberg cuts is known to be delicate in quantum theory~\cite{Nurgalieva_2020,Hausmann2025fire}. Since reference frame transformations relate such descriptions, they inherit this difficulty.

\begin{figure}
    \centering
    \begin{subfigure}{0.48\textwidth}
        \centering
        \includegraphics[scale = 0.95]{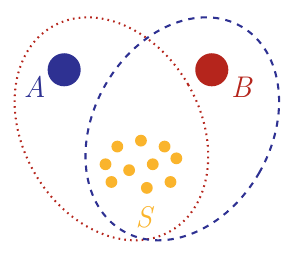}
        \caption{
            Heisenberg cuts of the perspective of $A$ and $B$ (blue dashed and red dotted, respectively) when one is modelled as a physical system in the other's perspective.}
        \label{fig:QRF_cut}
    \end{subfigure}
    \hfil
    \begin{subfigure}{0.48\textwidth}
        \centering
        \includegraphics[scale = 0.95]{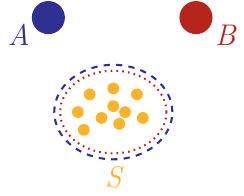}
        \caption{Heisenberg cuts of the perspective of $A$ and $B$ (blue dashed and red dotted, respectively) when neither is modelled as a physical system in the other's perspective.}
        \label{fig:normal_cut}
    \end{subfigure}
    \caption{\textbf{Heisenberg cuts.} Originally, the notion of a Heisenberg cut was introduced in quantum theory as the boundary between the world described by quantum theory and the rest. Here, we use this notion more generally: the boundary between the systems described from a perspective and the rest. Figure (\subref{fig:QRF_cut}) shows the case relevant to the Paradox of the Third Particle, and Figure (\subref{fig:normal_cut}) the standard situation.}
    \label{fig:Heisenberg_cuts}
\end{figure}

A further conceptual difficulty arises when frame transformations are interpreted as mappings between the epistemic states of different observers. Suppose Bob has his eyes open, while Alice has hers closed and assigns a probability of $\frac{1}{2}$ to Bob being at position $1$ or $100$. A standard quantum reference frame transformation implies that Bob must possess a matching uncertainty, locating Alice at $-1$ or $-100$ with equal probability. Yet, this directly contradicts Bob's actual local information: since his eyes are open, he knows Alice is at a definite position, say $-1$.

This problem occurs because Alice and Bob have different information about the setup. Reference frame transformations, however, fail to account for this difference. If they do not map between descriptions of different observers, how should reference frame transformations be interpreted? One option is to view them as transformations between descriptions held by a single observer with fixed knowledge, such as Alice, adapted to using different reference frames for her measurement apparatus. 

On a separate note, our construction of the classical Paradox of the Third Particle requires introducing uncertainty regarding a particle's position and momentum. This aligns well with the insights of Spekkens' toy theory \cite{Spekkens_2007,Spekkens2015,Catani2017,Hausmann2021}, which demonstrates that numerous quantum features find classical counterparts once one imposes a restriction, akin to the Heisenberg uncertainty relation, on what can be known about phase-space variables. 

Finally, we hope that our no-go theorem, \cref{thm:nogo}, is not read as ruling out transformations between physical reference frames. Instead, we see it as an invitation to explore their foundations. For example, the notion of the atlas of perspectives introduced here provides a playground for this exploration. It gives concrete criteria, such as consistency: different paths through any atlas, not only the one in \cref{fig:no_go_framework}, must lead to the same result.

\section*{Acknowledgements}
We thank Carlo Cepollaro, Thomas Galley, Fedele Lizzi, Maximilian Lock, Lorenzo Maccone, and T. Rick Perche for discussions. This work was funded by the Swiss National Science Foundation via project No.\ \mbox{20QU-1\_225171}. We are also grateful for the support from the NCCR SwissMAP, the ETH Zurich Quantum Center and the WithOut SpaceTime (WOST) project funded by the Grant ID\#~62312 from the JTF. The opinions expressed in this work are those of the authors and do not necessarily reflect the views of the JTF. This work is part of the European Union COST Actions Bridging high and low energies in search of quantum gravity (CA23130) and Relativistic Quantum Information (CA23115).
 
\printbibliography
\numberwithin{theorem}{section}
\numberwithin{definition}{section}
\numberwithin{lemma}{section}
\numberwithin{remark}{section}
\numberwithin{proposition}{section}
\numberwithin{corollary}{section}
\numberwithin{equation}{section}

\appendix
\newpage 

\section*{\LARGE{Appendix}}

\section{Classical mechanics}\label{app:class_mech}
In the Hamiltonian formulation of classical mechanics, the state space of the system is its phase space, which, mathematically, is a symplectic manifold. A map that preserves the symplectic structure is called a \emph{symplectomorphism}, i.e., a diffeomorphism $T: \mathcal{P} \to \mathcal{Q}$ such that $T^{*}\Omega_{\mathcal{Q}} = \Omega_{\mathcal{P}}$, where $\mathcal{P}, \mathcal{Q}$ are symplectic manifolds and $\Omega_{\mathcal{Q}}, \Omega_{\mathcal{P}}$ their respective symplectic forms.

For this paper, manifolds will always be connected and finite-dimensional. Furthermore, often they are $\mathbb{R}^{2N}$ and the symplectic form $\Omega$ is such that there are global coordinates $(\vec{q},\vec{p})$ in which $\Omega = \sum_{i = 1}^N dq_i \wedge dp_i$. This symplectic form induces Poisson brackets $\{\cdot, \cdot\}$ as follows\footnote{For the definition of Poisson brackets on more general symplectic manifolds see, for example, \cite[Section 14.7]{Libermann1987}.}
\begin{equation}
    \{f, g\} = \sum_i \frac{\partial f}{\partial q_i} \frac{\partial g}{\partial p_i} - \frac{\partial g}{\partial q_i} \frac{\partial f}{\partial p_i}.
\end{equation}
Operationally, the Poisson brackets determine how one can act on the system. Concretely, if the system evolves under the Hamiltonian $H$, then its state $(\vec{q},\vec{p})$ evolves as 
\begin{equation}
    \dot{q}_i = \{q_i,H\} \text{ and } \dot{p}_i = \{p_i,H\}.
\end{equation}

\subsection{The Cartesian product of phase spaces}
The Cartesian product of two manifolds $\mathcal{P}_B, \mathcal{P}_C$ is the set $\mathcal{P}_B \times \mathcal{P}_C$ equipped with the product topology. Its charts are given by the Cartesian product of the charts of $\mathcal{P}_B, \mathcal{P}_C$: if $\psi_B: U \to \mathbb{R}^m$ is a chart of $\mathcal{P}_B$ and $\psi_C: V \to \mathbb{R}^n$ is a chart of $\mathcal{P}_C$, then $\psi_B \times \psi_C: U \times V \to \mathbb{R}^{m+n}$ is a chart of $\mathcal{P}_B \times \mathcal{P}_C$. 

If $\mathcal{P}_B, \mathcal{P}_C$ are symplectic manifolds with symplectic forms $\Omega_B, \Omega_C$, respectively, then there is a natural way to define a symplectic form $\Omega_{BC}$ on $\mathcal{P}_B \times \mathcal{P}_C$ as
\begin{equation}
    \Omega_{BC} = \Pi_{B}^{*}\Omega_{B} + \Pi_{C}^{*}\Omega_{C},
\end{equation}
where $\Pi_{B}^{*}$ and $\Pi_{C}^{*}$ are the pullbacks of the canonical projections \cite[p.\ 21]{Silva2008}. These canonical projections can also be used to define the trace over $C$ of a probability measure~$P_{BC}$ on $\mathcal{P}_B \times \mathcal{P}_C$. For a measurable set $X$ on $\mathcal{P}_B$, we define
\begin{equation}
    \tr_C(P_{BC})(X) \coloneq P_{BC}(\Pi_{B}^{-1}(X)).
\end{equation}
Finally, we say that a symplectomorphism $T_{BC \to DE}: \mathcal{P}_{B} \times \mathcal{P}_C  \to \mathcal{P}_{D} \times \mathcal{P}_E$ \emph{factorizes} if there exist symplectomorphisms $T_{B \to D}: \mathcal{P}_{B} \to \mathcal{P}_{D}$ and $T_{C \to E}: \mathcal{P}_{C} \to \mathcal{P}_{E}$ such that $T_{BC \to DE} = T_{B \to D} \times T_{C \to E}$.

\section{Proofs}\label{app:proofs}

\subsection{Classical reference frame transformations}
\begin{resultbox}
    \classicalRF*
\end{resultbox}

\begin{proof}
    The symplectic form on $\mathcal{P}^{(A)}$ is given by 
    \begin{equation}
       \Omega^{(A)} = dx^{(A)}_B \wedge dp^{(A)}_B + \sum_i dx^{(A)}_{C_i} \wedge dp^{(A)}_{C_i}.
    \end{equation}
    Due to \cref{pos:pos_mom_sep}, we know that the push-forwards of ${\frac{\partial}{\partial x^{(A)}_B}}$ and ${\frac{\partial}{\partial x^{(A)}_{C_i}}}$ under $\bar{T}^{A \to B}$ are both linear combinations of $\smash{\frac{\partial}{\partial x^{(B)}_A}}$ and $\smash{\frac{\partial}{\partial x^{(B)}_{C_i}}}$. \Cref{pos:geometry} then implies that 
    \begin{equation}
        \bar{T}^{A \to B}_{*} \frac{\partial}{\partial x^{(A)}_B} = -\frac{\partial}{\partial x^{(B)}_A} - \sum_{i} \frac{\partial}{\partial x^{(B)}_{C_i}} \text{ and } \bar{T}^{A \to B}_{*}  \frac{\partial}{\partial x^{(A)}_{C_i}} =  \frac{\partial}{\partial x^{(B)}_{C_i}}.
    \end{equation}
   Furthermore, \Cref{pos:geometry} also implies that
    \begin{equation}
        \bar{T}^{A \to B}_{*} \frac{\partial}{\partial p^{(A)}_B} = c^{(B)}_A \frac{\partial}{\partial p^{(B)}_A} + \sum_i c^{(B)}_{C_i} \frac{\partial}{\partial p^{(B)}_{C_i}}
    \end{equation}
    and
    \begin{equation}
        \bar{T}^{A \to B}_{*} \frac{\partial}{\partial p^{(A)}_{C_j}} = c^{(C_j)}_A \frac{\partial}{\partial p^{(B)}_A} + \sum_i c^{(C_j)}_{C_i} \frac{\partial}{\partial p^{(B)}_{C_i}}.
    \end{equation}
    Using \cref{pos:reversibility}, we find that  
    \begin{align}
         \Omega^{(B)}(\bar{T}^{A \to B}_{*} \frac{\partial}{\partial x^{(A)}_B},\bar{T}^{A \to B}_{*} \frac{\partial}{\partial p^{(A)}_B}) &=  1 \\
         \Omega^{(B)}(\bar{T}^{A \to B}_{*} \frac{\partial}{\partial x^{(A)}_{C_i}}, \bar{T}^{A \to B}_{*} \frac{\partial}{\partial p^{(A)}_{C_j}}) &=  \delta_{i,j} \\
         \Omega^{(B)}(\bar{T}^{A \to B}_{*} \frac{\partial}{\partial x^{(A)}_B},\bar{T}^{A \to B}_{*} \frac{\partial}{\partial p^{(A)}_{C_i}}) &=  0 \\
        \Omega^{(B)}(\bar{T}^{A \to B}_{*} \frac{\partial}{\partial x^{(A)}_{C_i}},\bar{T}^{A \to B}_{*} \frac{\partial}{\partial p^{(A)}_{B}}) &=  0.
    \end{align}
    From the first equality it follows that $\smash{c^{(B)}_A + \sum_i c^{(B)}_{C_i}= -1}$. From the second, $\smash{c^{(C_j)}_{C_i} = \delta_{i,j}}$. Finally, from the third, $\smash{c^{(C_i)}_{A} + \sum_j c^{(C_i)}_{C_j}=0}$ and from the fourth $\smash{c^{(B)}_{C_i} =0}$. Combined, this leads to the following system of partial differential equations
    \begin{align}
        \frac{\partial}{\partial p^{(A)}_B} p^{(B)}_{A}\left(\bar{T}^{A \to B}\left(\begin{psmallmatrix}
    x^{(A)}_B\\
    p^{(A)}_B
    \end{psmallmatrix}, \begin{psmallmatrix}
    x^{(A)}_{C_1}\\
    p^{(A)}_{C_1}
    \end{psmallmatrix}, \dots\right)\right) &= -1 \\
        \frac{\partial}{\partial p^{(A)}_B} p^{(B)}_{C_i}\left(\bar{T}^{A \to B}\left(\begin{psmallmatrix}
    x^{(A)}_B\\
    p^{(A)}_B
    \end{psmallmatrix}, \begin{psmallmatrix}
    x^{(A)}_{C_1}\\
    p^{(A)}_{C_1}
    \end{psmallmatrix}, \dots\right)\right) &= 0 \\
        \frac{\partial}{\partial p^{(A)}_{C_i}} p^{(B)}_{A}\left(\bar{T}^{A \to B}\left(\begin{psmallmatrix}
    x^{(A)}_B\\
    p^{(A)}_B
    \end{psmallmatrix}, \begin{psmallmatrix}
    x^{(A)}_{C_1}\\
    p^{(A)}_{C_1}
    \end{psmallmatrix}, \dots\right)\right) &= -1 \\
        \frac{\partial}{\partial p^{(A)}_{C_i}} p^{(B)}_{C_j}\left(\bar{T}^{A \to B}\left(\begin{psmallmatrix}
    x^{(A)}_B\\
    p^{(A)}_B
    \end{psmallmatrix}, \begin{psmallmatrix}
    x^{(A)}_{C_1}\\
    p^{(A)}_{C_1}
    \end{psmallmatrix}, \dots\right)\right) &= \delta_{i,j}.
    \end{align}
    Integrating and imposing the initial conditions given in \cref{pos:pos_mom_sep}, the claimed form of the transformation follows.
\end{proof}

\subsection{Robust version of the classical Paradox of the Third Particle}\label{app:robust} 

We now show that there is a version of the classical paradox we presented in the main text that is robust under small deviations. Suppose the distribution of the particle $B$ and the random bit $M$ from $A$'s perspective is 
\begin{equation}
    P^{(A)}_{BM}\left(\begin{psmallmatrix}
    x_{B}\\
    p_{B}
    \end{psmallmatrix},m\right) =  \frac{1}{2}P_B(x_B,p_B+km)
\end{equation}
for some $k$ and distribution $P_B$ on the phase space of particle $B$. Then the bit $M$ can, from $A$'s perspective, be perfectly retrieved from $B$ as long as $P_B(\cdot,\cdot)$ has disjoint support from $P_B(\cdot,\cdot+k)$. As the reference frame transformation is symplectic, from $B$'s perspective, the bit $M$ can be perfectly retrieved from $A$.

Suppose, from $A$'s perspective, there is a third particle $C$ such that the joint distribution over $BCM$ is
\begin{equation}\label{eq:pers_A_with_C}
    P^{(A)}_{BCM}\left(\begin{psmallmatrix}
    x_{B}\\
    p_{B}
    \end{psmallmatrix},\begin{psmallmatrix}
    x_{C}\\
    p_{C}
    \end{psmallmatrix},m\right) =  \frac{1}{2}P_B(x_B,p_B+km)P_C(x_C,p_C). 
\end{equation}
Then the following theorem bounds the success probability, from $B$'s perspective, of retrieving $M$ from $A$. 

\begin{resultbox}
    \begin{theorem}\label{thm:robustclassical}
        Let $k \in \mathbb{R}$ and $\smash{P^{(A)}_{BCM}}$ as in \cref{eq:pers_A_with_C} be the description of particles~$BC$ from $A$'s perspective. Then, from $B$'s perspective, the probability for successfully decoding the message $M$ from the system $A$ is bounded by
        \begin{equation}
            P^{(B)}_{\mathrm{succ}} \leq \frac{1}{4}\|Q_C - S_{k}Q_C\|_1 + \frac{1}{2}
        \end{equation}
        where $Q_C(p_C) \coloneq \int dx_C P_C(x_C,p_C)$ and $S_{k} Q_C(p) \coloneq Q_C(p+k)$.
    \end{theorem}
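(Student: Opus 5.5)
The plan is to compute the reduced distribution $P^{(B)}_{AM}$ explicitly and then bound the optimal guessing probability using the standard two-hypothesis identity. By \cref{thm:classicalRF} with a single particle $C$, the transformation to $B$'s perspective is
\[
(x_A,p_A,x_C,p_C) = (-x_B,\,-p_B-p_C,\,x_C-x_B,\,p_C).
\]
This map is a symplectomorphism with unit Jacobian. Consequently, the push-forward of \cref{eq:pers_A_with_C} has density
\[
P^{(B)}_{ACM} = \tfrac12 P_B(-x_A,\,-p_A-p_C+km)\,P_C(x_C+x_A,\,p_C).
\]

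Next I marginalize over $C$. Integrating over $x_C$, and shifting the variable so that it absorbs $x_A$, replaces $P_C$ by $Q_C(p_C)$. This gives
\[
P^{(B)}_{AM}(x_A,p_A,m) = \tfrac12\int dp_C\, P_B(-x_A,\,-p_A-p_C+km)\,Q_C(p_C).
\]
Substituting $p_C \mapsto p_C + km$ rewrites this as $\tfrac12\int dp_C\, P_B(-x_A,-p_A-p_C)\,Q_C(p_C+km)$. Write $R_m(x_A,p_A)$ for the conditional distribution of $A$ given $M=m$. Then $R_m = F*(S_{km}Q_C)$, where $F(x,p)\coloneq P_B(-x,-p)$ and the convolution acts in the momentum variable only.

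The guessing step is standard: for equal priors, the optimal success probability of distinguishing $R_0$ from $R_1$ is $\tfrac12 + \tfrac14\|R_0-R_1\|_1$. It therefore suffices to show
\[
\|R_0-R_1\|_1 \le \|Q_C - S_kQ_C\|_1.
\]
This follows because convolution with the probability density $F$ is a Markov (stochastic) map, and Markov maps contract the $L^1$ norm. Concretely, $R_0-R_1 = F*(Q_C-S_kQ_C)$. Pulling the absolute value inside the integral and using Fubini, together with $\int F = 1$, yields the bound.

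The main obstacle is bookkeeping rather than ideas. I need to track the signs and the direction of the shift under the change of variables, and confirm that the shift by $k$ lands on $Q_C$ exactly as $S_k$ (for $m=1$ versus $m=0$). Since $\|Q_C-S_kQ_C\|_1$ is invariant under replacing $k$ by $-k$, any sign slip is harmless. For general probability measures rather than densities, the same argument goes through with measures in place of densities, using total variation norms.
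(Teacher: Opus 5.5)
Your proposal is correct and follows essentially the paper's route: transform with \cref{thm:classicalRF}, integrate out $x_C$ and $p_C$ to obtain $P_B(-x_A,\cdot)$ convolved with a $km$-shifted $Q_C$, use $P_{\mathrm{succ}}=\frac12+\frac14\|R_0-R_1\|_1$, and contract the $L^1$ norm. The only difference is that the paper does the contraction through the dual form $\|\mu-\nu\|_1=2\sup_{0\le\sigma\le1}\int\sigma\,(\mu-\nu)$ plus a change of variables, whereas your triangle inequality with Fubini is the equivalent primal version.

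There are two harmless bookkeeping slips. First, the transformed density contains $P_C(x_C-x_A,p_C)$ rather than $P_C(x_C+x_A,p_C)$, since $x_C^{(A)}=x_C^{(B)}+x_B^{(A)}=x_C^{(B)}-x_A^{(B)}$. Second, with your $F$, $R_m(x_A,p_A)=\int dp_C\,F(x_A,p_A+p_C)\,Q_C(p_C+km)$ is a correlation rather than a convolution. Neither affects the bound: the $x_C$-dependence is integrated out, and $(x_A,p_A)\mapsto F(x_A,p_A+p_C)$ is still a probability density for each $p_C$.
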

\end{resultbox}

Note that $\|Q_C - S_{k}Q_C\|_1$ can be arbitrarily small. For example, assume that we have $P_C(x_C,p_C) = \frac{1}{2R} \bar{P}_C(x_C)\chi_{[-R,R]}(p_C)$, where $\bar{P}_C(x_C)$ is some probability distribution. Then $Q_C = \frac{1}{2R}\chi_{[-R,R]}$, and we find, for $|k| \leq 2R$, that 
\begin{equation}
    \|Q_C - S_{k}Q_C\|_1 = \frac{1}{2R}\int dp \, |\chi_{[-R,R]}(p)-\chi_{[-R,R]}(p+k)| = \frac{|k|}{R}.
\end{equation}

\begin{proof}[Proof of \cref{thm:robustclassical}]
    The transformation of \cref{thm:classicalRF} yields $B$'s perspective
    \begin{equation}
        P^{(B)}_{ACM}\left(\begin{psmallmatrix}
            x_{A}\\
            p_{A}
        \end{psmallmatrix},\begin{psmallmatrix}
            x_{C}\\
            p_{C}
        \end{psmallmatrix},m\right) = \frac{1}{2}P_B(-x_A, -p_A-p_C+km) P_C(x_C-x_A,p_C). 
    \end{equation}
    To find how well $M$ can be recovered from $A$, we compute the marginal over the system $C$
    \begin{align}
        P^{(B)}_{AM}\left(\begin{psmallmatrix}
            x_{A}\\
            p_{A}
        \end{psmallmatrix},m\right) &=  \frac{1}{2} \int dp_C  P_B(-x_A, -p_A-p_C+km) \int dx_C P_C(x_C-x_A,p_C) \\
        &=  \frac{1}{2} \int dp_C  P_B(-x_A, -p_A-p_C+km) \int dx_C P_C(x_C,p_C) \\
        &=  \frac{1}{2} \int dp_C  P_B(-x_A, p_C) Q_C(-p_A-p_C+km) \\
        &= \frac{1}{2}  (P_B(-x_A, \cdot) * S_{km} Q_C)(-p_A)
    \end{align}
    where $S_{km} Q_C(p) = Q_C(p+km)$ and $Q_C(p_C) \coloneq \int dx_C P_C(x_C,p_C)$.  The maximum probability to be able to guess $m$ correctly when having access to $A$ is given by 
    \begin{align}
        P^{(B)}_{\mathrm{succ}} &= \frac{1}{2} + \frac{1}{4}\|P^{(B)}_{A}(x_A,p_A|m = 0) - P^{(B)}_{A}(x_A,p_A|m = 1)\|_1.
    \end{align}  
    We now bound the second term:
    \begin{align}
        \|(P_B * Q_C) - (P_B* S_{k} Q_C) \|_1 &= 2\sup_{0\leq\sigma \leq 1} \int dp_A dx_A  \sigma(p_A,x_A) \\ \nonumber
        &\quad \int dp_C  P_B(-x_A, p_C) (Q_C(-p_C-p_A) - Q_C(-p_C-p_A+k)) \\
        &= 2\sup_{0\leq\sigma \leq 1} \int dp_A dx_A \int dp_C \sigma(-p_A-p_C,x_A) \\ \nonumber
        &\quad P_B(-x_A, p_C) (Q_C(p_A) - Q_C(p_A+k)) \\
        &\leq 2\sup_{0\leq \bar{\sigma} \leq 1} \int dp_A \bar{\sigma}(p_A) (Q_C(p_A) - Q_C(p_A+k)) \\
        &= \|Q_C - S_{k}Q_C\|_1.
    \end{align}
    Thus, we find that $P^{(B)}_{\mathrm{succ}} \leq \frac{1}{2} + \frac{1}{4}\|Q_C - S_{k}Q_C\|_1$.
\end{proof}

\subsection{No-go theorem}
\begin{resultbox}
    \nogo*
\end{resultbox}

\begin{proof}
    First note that, when information preservation holds, for any two perspectives~$p,q$ that are maximally overlapping $T^{p \to q} = (T^{q \to p})^{-1}$. To see this, consider the map $T = T^{p \to q} \circ T^{q \to p}$, which is also invertible, as it is composed of invertible maps, per the information preservation assumption. Then, as the atlas is consistent, we have that $T = T \circ T$, which implies that $T$ is the identity. 

    We denote the perspectives in the graph of \cref{fig:no_go_framework} as follows: $p_1 = \{B\}_{\mathcal{D}_1}^{(A)}$, $p_2 = \{B,C\}_{\mathcal{D}_2}^{(A)}$, $q_1 = \{A\}_{\bar{\mathcal{D}}_1}^{(B)}$, and $q_2 = \{A,C\}_{\bar{\mathcal{D}}_2}^{(B)}$. We now prove the classical and quantum versions separately. 

    \textbf{Classical version.}
    As the graph of \cref{fig:no_go_framework} is a subgraph of the atlas, we focus on this graph. 
    Let the subsystem partitioning on the perspective $p_2$ be given by $\smash{\mathcal{P}_{2} = \mathcal{P}^{(A)}_{B} \times \mathcal{P}^{(A)}_C}$ and for $q_2$ given by $\smash{\mathcal{Q}_{2} = \mathcal{P}^{(B)}_{A} \times \mathcal{P}^{(B)}_C}$. Suppose that information preservation holds. For simplicity, we use the same symbols, like $T^{p_1 \to q_1}$, for the underlying symplectomorphisms on phase space rather than for the induced transformations on distributions. Note that the above proof of $(T^{p \to q})^{-1} = T^{q \to p}$ also applies to the underlying symplectomorphisms. By decomposability, the removal of $C$ is given by the trace, which on phase-space points, acts as the canonical projection. Consistency of the atlas therefore implies that
    \begin{equation}
        T^{p_1 \to q_1} \circ \Pi^{(A)}_{B} = \Pi^{(B)}_{A} \circ T^{p_2 \to q_2},
    \end{equation}
    where $\Pi^{(A)}_{B}$ and $\Pi^{(B)}_{A}$ are the canonical projection maps and $T^{p_1 \to q_1}$ and $T^{p_2 \to q_2}$ the reference frame transformations.
    Thus, for any $(t,s) \in \smash{\mathcal{P}^{(A)}_B \times \mathcal{P}^{(A)}_C}$
    \begin{equation}
        (T^{q_1 \to p_1} \times \id_C) \circ T^{p_2 \to q_2}(t,s) = (t,f(t,s))
    \end{equation}
    for some function $f$.
    Information preservation implies that the function $F \coloneq (T^{q_1 \to p_1} \times \id_C) \circ T^{p_2 \to q_2}$ is symplectic. Consider a pair $(t,s) \in \mathcal{P}^{(A)}_B \times \mathcal{P}^{(A)}_C$. Let $U \subset \mathcal{P}^{(A)}_{B}$ be an open set such that $t \in U$ and such that there is a chart $\psi$
    \begin{align}
        \psi: U &\to \mathbb{R}^{2m} \\
            t &\mapsto (x^{(A)}_1,p^{(A)}_1, \dots),
    \end{align}
    where the symplectic form on $U$ is given by $\sum_i dx^{(A)}_i \wedge dp^{(A)}_i$. Analogously, let $V^{(O)} \subset \mathcal{P}^{(O)}_C$ for $O \in \{A,B\}$ be an open set such that $s \in V^{(A)}$, $f(t,s) \in V^{(B)}$ and 
    \begin{align}
        \phi^{(O)}: V^{(O)} &\to \mathbb{R}^{2n} \\
            u &\mapsto (q^{(O)}_1,\pi^{(O)}_1, \dots)
    \end{align}
    is a chart such that the symplectic form on $V^{(O)}$ is given by $\sum^{n}_{i = 1} dq^{(O)}_i \wedge d\pi^{(O)}_i$. These open sets and charts exist due to Darboux's theorem. 
    Then, the fact that the function $F$ is symplectic implies that $\{F_* \frac{\partial}{\partial q^{(A)}_i}, F_* \frac{\partial}{\partial \pi^{(A)}_i}\}_{i}$ is a basis of $T_{(f(t,s))}\mathcal{P}^{(B)}_C$ such that 
    \begin{equation}
        \Omega(F_{*} \frac{\partial}{\partial q^{(A)}_i}, F_{*} \frac{\partial}{\partial \pi^{(A)}_j}) = \delta_{i,j}.
    \end{equation}
    Thus, there exist coefficients $a^{i}_k,b^{i}_\ell,c^{i}_k,d^{i}_{\ell}$ such that
    \begin{align}
        F_* \frac{\partial}{\partial x^{(A)}_i} &=  \frac{\partial}{\partial x^{(A)}_i} + \sum_k a^{i}_k F_*\frac{\partial}{\partial q^{(A)}_k} + \sum_\ell b^{i}_\ell F_*\frac{\partial}{\partial \pi^{(A)}_\ell} \\
        F_* \frac{\partial}{\partial p^{(A)}_i} &=  \frac{\partial}{\partial p^{(A)}_i} + \sum_k c^{i}_k F_*\frac{\partial}{\partial q^{(A)}_k} + \sum_\ell d^{i}_\ell F_*\frac{\partial}{\partial \pi^{(A)}_\ell}.
    \end{align}
    The requirement that $F$ is symplectic implies that 
    \begin{align}
        \Omega(F_* \frac{\partial}{\partial x^{(A)}_i}, F_* \frac{\partial}{\partial q^{(A)}_k}) &= - b^{i}_k = 0\\
        \Omega(F_* \frac{\partial}{\partial x^{(A)}_i}, F_* \frac{\partial}{\partial \pi^{(A)}_k}) &= a^{i}_k = 0 \\
        \Omega(F_* \frac{\partial}{\partial p^{(A)}_i}, F_* \frac{\partial}{\partial q^{(A)}_k}) &= -d^{i}_k = 0 \\
        \Omega(F_* \frac{\partial}{\partial p^{(A)}_i}, F_* \frac{\partial}{\partial \pi^{(A)}_k}) &= c^{i}_k= 0.
    \end{align}
    Therefore, for any differentiable path $g$ in $U$, it holds that $\frac{d}{dz}f(g(z), s)|_{z=0} = 0$. Because all phase spaces are connected, this means that $f$ is constant in its first entry. We write~$f(s)$ instead of $f(t,s)$. This then implies that 
    \begin{equation}
        T^{p_2 \to q_2}(t,s) = (T^{p_1 \to q_1}(t),f(s))
    \end{equation}
    which contradicts the assumption that $T^{p_2 \to q_2}$ is relational. As our assignment of subsystem partitioning was arbitrary, this conclusion holds for any such assignment.

    \textbf{Quantum version.}  As the graph of \cref{fig:no_go_framework} is a subgraph of the atlas, we focus on this graph. Let the subsystem partitioning on the perspective $p_2$ be given by $\smash{\mathcal{P}_{2} = \mathcal{H}^{(A)}_{B} \otimes \mathcal{H}^{(A)}_C}$ and for $q_2$ given by $\smash{\mathcal{Q}_{2} = \mathcal{H}^{(B)}_{A} \otimes \mathcal{H}^{(B)}_C}$. Suppose that information preservation holds. For simplicity, we use the same symbols, like $T^{p_1 \to q_1}$, for the underlying unitaries rather than for the induced transformations on density operators. Note that the above proof of $(T^{p \to q})^{-1} = T^{q \to p}$ also applies to the underlying unitaries, up to a global phase, which is irrelevant in what follows.

    Consider a state $\ket{\phi} \in \mathcal{H}^{(A)}_B$. By decomposability, the removal of $C$ is given by the partial trace, so the consistency of the atlas implies that 
    \begin{equation}
        T^{q_1 \to p_1} \tr_C\!\left(T^{p_2 \to q_2}(\ketbra{\phi}_B \otimes \rho_C)(T^{p_2 \to q_2})^{\dagger}\right)(T^{q_1 \to p_1})^{\dagger} = \ketbra{\phi}_B.
    \end{equation}
    We define the unitary $W \coloneq (T^{q_1 \to p_1}\otimes \mathbbm{1}_C)\, T^{p_2 \to q_2}$. The above equation states that the marginal of $W(\ketbra{\phi}_B \otimes \rho_C)W^{\dagger}$ on the first subsystem is the pure state $\ketbra{\phi}_B$. Because a state that is pure on $B$ cannot be correlated with another subsystem, we have that
    \begin{equation}
        W(\ketbra{\phi}_B \otimes \rho_C)W^{\dagger} = \ketbra{\phi}_B \otimes \mathcal{N}(\ketbra{\phi}_B \otimes \rho_C),
    \end{equation}
    for some map $\mathcal{N}$. In particular, for a pure state $\rho_C = \ketbra{\psi}_C$, the left-hand side is pure, and therefore 
    \begin{equation}
        W(\ket{\phi}_B \otimes \ket{\psi}_C) = \ket{\phi}_B \otimes V(\ket{\phi}_B, \ket{\psi}_C),
    \end{equation}
    where $V(\ket{\phi}_B, \ket{\psi}_C) = {}_B\!\bra{\phi}W(\ket{\phi}_B \otimes \ket{\psi}_C)$. Let now $\ket{0}$ and $\ket{1}$ be two orthogonal unit vectors, then from the linearity of $W$ it follows that 
    \begin{equation}
       \frac{1}{\sqrt{2}}(\ket{0}_B \otimes V(\ket{0}_B, \ket{\psi}_C) + \ket{1}_B \otimes V(\ket{1}_B, \ket{\psi}_C)) = \frac{1}{\sqrt{2}}(\ket{0}_B+\ket{1}_B) \otimes V(\frac{1}{\sqrt{2}}(\ket{0}_B+\ket{1}_B), \ket{\psi}_C).
    \end{equation}
    Therefore, we have that 
    \begin{equation}
        V(\ket{0}_B, \ket{\psi}_C)  = V(\ket{1}_B, \ket{\psi}_C) = V(\frac{1}{\sqrt{2}}(\ket{0}_B+\ket{1}_B), \ket{\psi}_C).
    \end{equation}
    Let $\{\ket{e_i}_B\}_{i}$ be an orthonormal basis of the $B$ system, then because the above holds for all pairs of orthogonal vectors, we have that 
    \begin{equation}
       \forall i,j: V(\ket{e_i}_B, \ket{\psi}_C)  = V(\ket{e_j}_B, \ket{\psi}_C).
    \end{equation}
    Let now $\ket{\phi} = \sum_{i = 1}^{N} a_i \ket{e_i}$ be a state, then using this result and the linearity of $W$, we have that
    \begin{align}
        W(\ket{\phi}_B \otimes \ket{\psi}_C) &= \ket{\phi}_B \otimes V(\ket{\phi}_B, \ket{\psi}_C) \\
        &= \sum_i a_i \ket{e_i}_B\otimes V(\ket{e_i}_B, \ket{\psi} ) \\
        &= \ket{\phi}_B \otimes V(\ket{e_1}_B, \ket{\psi} ).
    \end{align}
    Thus, $V(\ket{\phi}_B, \ket{\psi}_C)$ is independent of $\ket{\phi}_B$. Since vectors that are finite linear combinations of the basis vectors are dense and $W$ is bounded, this holds for all pure states of~$B$.
    
    Consequently, $W = \mathbbm{1}_B \otimes V$ for some unitary $V$, and hence $T^{p_2 \to q_2} = (T^{p_1 \to q_1} \otimes \mathbbm{1}_C)(\mathbbm{1}_B \otimes V) = T^{p_1 \to q_1} \otimes V$. The reference frame transformation factorizes, which contradicts the relationality assumption. As the subsystem assignment was general, the same argument holds for any subsystem assignment.
\end{proof}

\begin{resultbox}
    \subsysrelativity*
\end{resultbox}

Before we prove this result, the following lemma is useful. 
\begin{resultbox}
    \begin{lemma}\label{lem:algebra}
        Consider a product phase space $\mathcal{P}_{B} \times \mathcal{P}_C$ and denote by $\mathcal{A}_B,\mathcal{A}_C$ the algebras of observables on systems $B$ and $C$, then, for any observable $f$ on $\mathcal{P}_B \times \mathcal{P}_C$,
        \begin{equation}
            \{f,\mathcal{A}_C\} = 0 \implies f \in \mathcal{A}_B.
        \end{equation}
    \end{lemma}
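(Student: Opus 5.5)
The plan is to show that a smooth observable $f$ on $\mathcal{P}_B \times \mathcal{P}_C$ that Poisson-commutes with every function of the $C$-variables has vanishing derivative along all $C$-directions. Since $\mathcal{P}_C$ is connected, $f$ must then depend only on the $B$-point, i.e.\ $f = g \circ \Pi_B$ for some $g \in \mathcal{C}^\infty(\mathcal{P}_B)$, which is the meaning of $f \in \mathcal{A}_B$.

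\textbf{Local step.} Fix a point $(t,s)$. By Darboux's theorem, as in the proof of \cref{thm:nogo}, choose a chart $\psi$ around $t$ with coordinates $(x_i,p_i)$ and a chart $\phi$ around $s$ with coordinates $(q_k,\pi_k)$, such that $\Omega_B = \sum_i dx_i\wedge dp_i$ and $\Omega_C = \sum_k dq_k \wedge d\pi_k$ locally. Since $\Omega_{BC} = \Pi_B^*\Omega_B + \Pi_C^*\Omega_C$, the product chart is Darboux for the product, and the Poisson bracket there is the sum of the $B$-part and the $C$-part. Multiply the local coordinate functions $q_k,\pi_k$ by a bump function on $\mathcal{P}_C$ that equals $1$ near $s$. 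This gives global elements $\tilde q_k, \tilde \pi_k \in \mathcal{A}_C$ that coincide with the coordinates on a neighbourhood of $s$. Near $(t,s)$ we then get
\[
\{f,\tilde q_k\} = -\frac{\partial f}{\partial \pi_k}, \qquad \{f,\tilde \pi_k\} = \frac{\partial f}{\partial q_k}.
\]
The hypothesis $\{f,\mathcal{A}_C\}=0$ therefore forces all partial derivatives of $f$ in the $C$-directions to vanish near $(t,s)$. Equivalently, $df$ annihilates the vertical tangent spaces $0 \oplus T\mathcal{P}_C$ at every point.

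\textbf{Global step.} For fixed $t$, consider the map $s \mapsto f(t,s)$ on $\mathcal{P}_C$. It has zero differential everywhere. Because $\mathcal{P}_C$ is connected (all manifolds are assumed connected in \cref{app:class_mech}), it is constant; concretely, any two points can be joined by a piecewise smooth path, along which the derivative of $f$ vanishes. Define $g(t) := f(t,s_0)$ for a fixed $s_0$. Then $f = g\circ \Pi_B$, and $g$ is smooth because it is the restriction of $f$ to the embedded slice $\mathcal{P}_B\times\{s_0\}$. Hence $f\in\mathcal{A}_B$.

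\textbf{Main obstacle.} The only delicate point is the local step: producing enough globally defined observables in $\mathcal{A}_C$ to detect every $C$-direction. The bump-function construction handles this, since $\mathcal{A}_C$ consists of all smooth functions $g(z_C)$. An alternative is to use that Hamiltonian vector fields of functions on $\mathcal{P}_C$ span $T_s\mathcal{P}_C$ at each point, by nondegeneracy of $\Omega_C$. Either way, $\{f,h\} = df(X_h)$ vanishing for all such $h$ forces $df$ to be zero on the vertical directions.
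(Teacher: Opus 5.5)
Your proof is correct and follows essentially the same route as the paper: Darboux product charts, bump-function cutoffs $\beta\,q_k$ and $\beta\,\pi_k$ of the $C$-coordinates lying in $\mathcal{A}_C$, whose Poisson brackets with $f$ give the $C$-partial derivatives, and connectedness of $\mathcal{P}_C$ to conclude that $f$ depends only on the $B$-point. Your write-up is slightly more explicit than the paper's in one place, namely in justifying that $g(t)=f(t,s_0)$ is smooth by restricting $f$ to the embedded slice $\mathcal{P}_B\times\{s_0\}$.
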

\end{resultbox}
\begin{proof}
    Let $(z_B,z_C)$ be a point in $\mathcal{P}_{B} \times \mathcal{P}_C$ and consider a chart $\psi: U_B \times U_C \to \mathbb{R}^{2n}$ where $(z_B,z_C) \in U_B \times U_C$, that respects the subsystem partition $(x_{B,1}, p_{B,1}, \dots, x_{C,1}, p_{C,1}, \dots)$ and such that $\Omega$ has standard form on $U_B \times U_C$. Consider the coordinate function $p_{C,i}$. We multiply the coordinate function $p_{C,i}$ by a smooth function $\beta$ on $\mathcal{P}_C$ such that $\beta = 1$ in a neighbourhood $V_C$ around $z_C$ and that is supported only on $U_C$. The function $\beta p_{C,i}$ is then an element of $\mathcal{A}_C$. Then, we find that within $U_B \times V_C$
    \begin{equation}
        0 = \{f, \beta p_{C,i}\} = \frac{\partial f}{\partial x_{C,i}}.
    \end{equation}
    As this holds for all $x_{C,i}$ and, by an analogous argument, also for all $p_{C,i}$ using that the phase space $\mathcal{P}_C$ is connected, we find that $f$ is constant in all coordinates of the~$C$ system. Therefore, $f \in \mathcal{A}_B$.
\end{proof}

\begin{proof}[Proof of \cref{thm:subsysrelativity}]
    We separately prove the classical and the quantum version. Because information preservation holds, we denote by $\bar{T}^{A \to B}$ the function on phase space in the classical case or the unitary in the quantum case instead of the induced transformation on the state space. Moreover, we use the notation $\bar{T}^{B \to A} = (\bar{T}^{A \to B})^{-1}$.

    \textbf{Classical version.} Suppose that relationality does not hold. Then the reference frame transformation factorizes, and it follows directly that subsystems are not relative.

    Before proving the other direction, we introduce some notation. We define 
    \begin{equation}
        (\bar{T}^{A \to B})_{*} f \coloneq f \circ (\bar{T}^{A \to B})^{-1}.
    \end{equation}
    
    Suppose that relativity of subsystems is not true, i.e., $(\bar{T}^{A \to B})_{*} \mathcal{A}_{C}^{(A)} = \mathcal{A}_{C}^{(B)}$. Then, $0 = \smash{\{\mathcal{A}_{B}^{(A)}, (\bar{T}^{B \to A})_{*} \mathcal{A}_{C}^{(B)} \}}$, which implies that $\smash{\{(\bar{T}^{A \to B})_{*} \mathcal{A}_{B}^{(A)},  \mathcal{A}_{C}^{(B)} \}} = 0$.
    Therefore, by \cref{lem:algebra}, it follows that $\smash{(\bar{T}^{A \to B})_{*} \mathcal{A}_{B}^{(A)} \subset  \mathcal{A}_{A}^{(B)}}$. Applying a similar argument with the roles of $A$ and $B$ exchanged yields
    \begin{equation}\label{eq:alg_preservation}
        (\bar{T}^{A \to B})_{*} \mathcal{A}_{B}^{(A)} =  \mathcal{A}_{A}^{(B)}.
    \end{equation}
    
    We define the function $\smash{R  =  \Pi_B^{(A)}\circ \bar{T}^{B \to A}}$, where $\smash{\Pi_B^{(A)}}$ is the canonical projection of $\smash{\mathcal{P}_B^{(A)}\times\mathcal{P}_C^{(A)}}$ onto the $B$ factor. Analogously, we define another function $\smash{R_C  = \Pi_C^{(A)} \circ \bar{T}^{B \to A}}$, where~$\smash{\Pi_C^{(A)}}$ is the canonical projection of $\mathcal{P}_B^{(A)}\times\mathcal{P}_C^{(A)}$ onto the $C$ factor. 

    Let $\smash{\psi^{(A)}_B, \psi^{(A)}_C}$ be charts for $\smash{\mathcal{P}_B^{(A)}, \mathcal{P}_C^{(A)}}$, respectively. Denote by $\smash{\psi^{(A)}_{B,i}}$ the $i$th coordinate, and analogously for the other chart. Let $\smash{\beta_B^{(A)}, \beta_C^{(A)}}$ be smooth functions on $\mathcal{P}_B^{(A)}$ and $\mathcal{P}_C^{(A)}$, respectively, each supported only on the domain of the corresponding chart and equal to~$1$ on an open subset of that domain.
     
    We multiply the coordinate function, $\psi^{(A)}_{B,i}$, which is only defined on the chart, by $\beta_B^{(A)}$  to construct a function $\smash{\beta_B^{(A)} (\psi_{B,i}^{(A)} \circ \Pi_B^{(A)}) \in \mathcal{A}_B^{(A)}}$. By \cref{eq:alg_preservation} its image under $(\bar{T}^{A \to B})_{*}$ lies in $\smash{\mathcal{A}_A^{(B)}}$. Hence, on this chart, $R$ is constant in its second entry. Because this holds for any chart, and the manifold $\mathcal{P}_C^{(B)}$ is connected, $R$ is globally constant in its second entry. Analogously, we multiply the coordinate functions of system $C$ by $\beta_C^{(A)}$ to construct a function $\beta_C^{(A)}(\smash{\psi_{C,i}^{(A)} \circ \Pi_C^{(A)}}) \in \mathcal{A}_C^{(A)}$. The assumption that subsystem relativity is not true implies that its image under $(\bar{T}^{A \to B})_{*}$ lies in $\smash{\mathcal{A}_C^{(B)}}$. Thus, $R_C$ is constant in its first entry on this chart. As $\mathcal{P}_A^{(B)}$ is connected and the chart is arbitrary, $R_C$ is globally constant in its first entry.
   
   Therefore, the two functions define maps $R: \mathcal{P}_A^{(B)} \to \mathcal{P}_B^{(A)}$ and $R_C: \mathcal{P}_C^{(B)} \to \mathcal{P}_C^{(A)}$. Hence, $\smash{\bar{T}^{B \to A} = R \times R_C}$, i.e., $\smash{\bar{T}^{B \to A}}$ factorizes. Thus, we have shown that relationality does not hold. 

    \textbf{Quantum version.} Suppose that relationality does not hold. Then the reference frame transformation factorizes, and it follows directly that subsystems are not relative. 

    Suppose that relativity of subsystems is false, then $\bar{T}^{A \to B} \mathcal{A}_{C}^{(A)} (\bar{T}^{A \to B})^{\dagger} = \mathcal{A}_{C}^{(B)}$. This condition implies that 
    \begin{equation}\label{eq:no_influence_B}
        0 = [\bar{T}^{B \to A} \mathcal{A}_C^{(B)} (\bar{T}^{B \to A})^{\dagger}, \bar{T}^{B \to A}\mathcal{A}_A^{(B)}(\bar{T}^{B \to A})^{\dagger}] = [\mathcal{A}_C^{(A)}, \bar{T}^{B \to A}\mathcal{A}_A^{(B)}(\bar{T}^{B \to A})^{\dagger}].
    \end{equation}
    As $\mathcal{A}_B^{(A)}$ is the commutant of $\mathcal{A}_C^{(A)}$, it follows from \cref{eq:no_influence_B} that $\bar{T}^{B \to A}\mathcal{A}_A^{(B)}(\bar{T}^{B \to A})^{\dagger} \subset \mathcal{A}_B^{(A)}$. Analogously, we find that 
    \begin{equation}\label{eq:no_influence_A}
        0 = [\bar{T}^{A \to B} \mathcal{A}_C^{(A)} (\bar{T}^{A \to B})^{\dagger}, \bar{T}^{A \to B}\mathcal{A}_B^{(A)}(\bar{T}^{A \to B})^{\dagger}] = [\mathcal{A}_C^{(B)}, \bar{T}^{A \to B}\mathcal{A}_B^{(A)}(\bar{T}^{A \to B})^{\dagger}].
    \end{equation}
    As $\mathcal{A}_A^{(B)}$ is the commutant of $\mathcal{A}_C^{(B)}$, it follows from \cref{eq:no_influence_A} that $\bar{T}^{A \to B}\mathcal{A}_B^{(A)}(\bar{T}^{A \to B})^{\dagger} \subset \mathcal{A}_A^{(B)}$. Combined with the inclusion above, it follows that $\bar{T}^{B \to A}\mathcal{A}_A^{(B)}(\bar{T}^{B \to A})^{\dagger} = \mathcal{A}_B^{(A)}$. 

    Let $\ket{0}_A \in \mathcal{H}_A^{(B)}$ and $\ket{0}_C \in \mathcal{H}_C^{(B)}$ be two normalized vectors, $\{\ket{i}_B\}$ an orthonormal basis of $\mathcal{H}_B^{(A)}$ and $\{\ket{k}_C\}$ of $\mathcal{H}_C^{(A)}$. Let $\ket{i}_B,\ket{\alpha}_{B}$ be two elements of $\{\ket{i}_B\}$ and $\ket{k}_{C},\ket{\gamma}_{C}$ two elements of $\{\ket{k}_C\}$.
    
    Since $\ketbra{i}{\alpha}_{B}\otimes \mathbbm{1}_C \in \mathcal{A}_B^{(A)}$ and $\mathbbm{1}_B \otimes \ketbra{k}{\gamma}_{C} \in \mathcal{A}_C^{(A)}$, the assumption that subsystems are not relative and $\bar{T}^{B \to A}\mathcal{A}_A^{(B)}(\bar{T}^{B \to A})^{\dagger} = \mathcal{A}_B^{(A)}$ imply that there are bounded operators $X_A^{i\alpha}$ on $\mathcal{H}_A^{(B)}$ and $Y_C^{k\gamma}$ on $\mathcal{H}_C^{(B)}$ such that
    \begin{align}
        (\bar{T}^{B \to A})^{\dagger}(\ketbra{i}{\alpha}_{B}\otimes \mathbbm{1}_C) \bar{T}^{B \to A} &= X_A^{i\alpha} \otimes \mathbbm{1}_C \label{eq:X_operator}\\
        (\bar{T}^{B \to A})^{\dagger} (\mathbbm{1}_B \otimes \ketbra{k}{\gamma}_{C}) \bar{T}^{B \to A} &= \mathbbm{1}_A \otimes Y_C^{k\gamma}. \label{eq:Y_operator}
    \end{align}
    We define the following two quantum channels
    \begin{align}
        \mathcal{E}(\rho_A^{(B)}) &= \tr_C(\bar{T}^{B \to A}(\rho_A^{(B)} \otimes \ketbra{0}_C) (\bar{T}^{B \to A})^{\dagger}) \\
        \mathcal{E}_C(\rho_C^{(B)}) &= \tr_B(\bar{T}^{B \to A}(\ketbra{0}_A \otimes \rho_C^{(B)}) (\bar{T}^{B \to A})^{\dagger}).
    \end{align}
    We now show that $\bar{T}^{B \to A}(\rho^{(B)}_{AC})(\bar{T}^{B \to A})^{\dagger} =  (\mathcal{E}\otimes \mathcal{E}_C)(\rho_{AC}^{(B)})$. Let $\{\ket{a}_A\}$ be an orthonormal basis of $\mathcal{H}_A^{(B)}$ and $\{\ket{c}_C\}$ of $\mathcal{H}_C^{(B)}$. First, we show that for operators of the form $\ketbra{a}{b}_A \otimes \ketbra{c}{d}_C$ the action of these two channels is equal, by comparing their matrix elements with respect to the basis $\{\ket{i}_B\}$ and $\{\ket{k}_C\}$ introduced above
    \begin{align}
        \bra{\alpha}\mathcal{E}(\ketbra{a}{b}_A)\ket{i}_B &= \tr((\bar{T}^{B \to A})^{\dagger}(\ketbra{i}{\alpha}_{B}\otimes \mathbbm{1}_C) \bar{T}^{B \to A} (\ketbra{a}{b}_A \otimes \ketbra{0}_C)) \\
        &= \tr(X_A^{i\alpha} \ketbra{a}{b}_A) \\
        \bra{\gamma}\mathcal{E}_C(\ketbra{c}{d}_C)\ket{k}_C &= \tr((\bar{T}^{B \to A})^{\dagger} (\mathbbm{1}_B \otimes \ketbra{k}{\gamma}_{C}) \bar{T}^{B \to A} (\ketbra{0}_A \otimes \ketbra{c}{d}_C)) \\
        &= \tr(Y_C^{k\gamma} \ketbra{c}{d}_C).
    \end{align}
    Let us now calculate the matrix elements of $\bar{T}^{B \to A}(\ketbra{a}{b}_A \otimes \ketbra{c}{d}_C)(\bar{T}^{B \to A})^{\dagger}$:
    \begin{align}
       &\bra{\alpha,\gamma}_{BC} \bar{T}^{B \to A}(\ketbra{a}{b}_A \otimes \ketbra{c}{d}_C) (\bar{T}^{B \to A})^{\dagger}\ket{i,k}_{BC} \\
       &= \tr((\bar{T}^{B \to A})^{\dagger}(\ketbra{i}{\alpha}_{B}\otimes \mathbbm{1}_C) \bar{T}^{B \to A} (\bar{T}^{B \to A})^{\dagger} (\mathbbm{1}_B \otimes \ketbra{k}{\gamma}_{C}) \bar{T}^{B \to A} (\ketbra{a}{b}_A \otimes \ketbra{c}{d}_C)) \\
        &= \tr((X_A^{i\alpha} \otimes Y_C^{k\gamma}) (\ketbra{a}{b}_A \otimes \ketbra{c}{d}_C)) \\
        &= \tr(X_A^{i\alpha} \ketbra{a}{b}_A) \tr(Y_C^{k\gamma} \ketbra{c}{d}_C) \\
        &= \bra{\alpha,\gamma}_{BC} (\mathcal{E}\otimes \mathcal{E}_C)(\ketbra{a}{b}_A \otimes \ketbra{c}{d}_C) \ket{i,k}_{BC}.
    \end{align}

    From linearity and the fact that both are quantum channels, and, thus, continuous with respect to the trace norm, we can use that the span of product operators is dense within the set of trace class operators. Thus, 
    \begin{equation}
        \bar{T}^{B \to A}\rho_{AC}^{(B)} (\bar{T}^{B \to A})^{\dagger} = (\mathcal{E}\otimes \mathcal{E}_C)(\rho_{AC}^{(B)}),
    \end{equation}
    which means that $\bar{T}^{B \to A}$ factorizes and relationality does not hold.
\end{proof}

Note that the quantum version of \cref{thm:subsysrelativity} also follows from standard results in the quantum causal model literature. In the language of \cite{Ormrod_2023}, the condition in \cref{eq:no_influence_B} states that there is no influence from $A$ to $C$ through the transformation $\bar{T}^{B \to A}$. Furthermore, when subsystems are not relative it follows that 
\begin{equation}
    [\mathcal{A}_B^{(A)}, \bar{T}^{B\to A}\mathcal{A}_C^{(B)}(\bar{T}^{B\to A})^\dagger] = 0,
\end{equation}
which means that there is also no influence from $C$ to $B$ through the transformation $\bar{T}^{B \to A}$. It is a standard result that such a unitary transformation factorizes; see, for example, \cite{Lorenz_2021}.  
\end{document}

%% file: preamble.tex
\usepackage{amsmath}
\usepackage{amssymb}
\usepackage{amsthm}
\usepackage{physics}
\usepackage{mathtools}
\usepackage{thmtools} 
\usepackage{thm-restate}
\usepackage{bbm}
\usepackage{cancel}

\allowdisplaybreaks

\PassOptionsToPackage{hyphens}{url}
\usepackage{url}
\usepackage{hyperref}
\hypersetup{colorlinks=true, linkcolor=Black, citecolor=Blue, urlcolor=Blue,hypertexnames=false}
\usepackage[capitalize,noabbrev]{cleveref}
\usepackage{bookmark}
\usepackage{footnotehyper}

\usepackage{tabularx}
\usepackage{booktabs}
\usepackage{multicol} 
\usepackage{makecell}
\usepackage{array,multirow}
\usepackage{caption}
\usepackage{subcaption}
\usepackage{graphicx}

\usepackage[dvipsnames]{xcolor}
\usepackage{csquotes}
\usepackage{datetime} 
\usepackage{datetime2}
\usepackage{fancyhdr}
\usepackage{titlesec}
\usepackage{sectsty} 
\usepackage{listings}
\usepackage{enumitem}
\usepackage{authblk}
\usepackage{svg}
\usepackage[left=1.15in, right=1.15in, top=1.2in, bottom=1in, headsep=.25in]{geometry}
\usepackage[bottom]{footmisc}
\usepackage[english]{babel}

\usepackage{tikz}
\usepackage{tikz-cd}
\usepackage{pgfplots}
\usetikzlibrary{calc}
\pgfplotsset{compat=1.18}
\usepgfplotslibrary{fillbetween}
\usepackage{relsize}
\usetikzlibrary{decorations.fractals,positioning}
\usetikzlibrary{arrows.meta}
\usetikzlibrary{positioning, shapes.geometric, calc}
\usetikzlibrary{shadows.blur}
\usetikzlibrary{shapes.misc}
\usetikzlibrary{decorations.pathmorphing}
\usetikzlibrary{arrows.meta}
\usetikzlibrary{decorations.markings}

\allowdisplaybreaks

\sectionfont{\fontsize{13}{15}\selectfont}
\subsectionfont{\fontsize{11}{15}\selectfont}

\renewenvironment{abstract}
{\small\begin{quote}\noindent \par{\sc \abstractname.}}
{\noindent\end{quote}}

\usepackage[sorting=none,backend=biber,style=numeric-comp,firstinits=false,giveninits=false,maxbibnames=99,maxcitenames=99]{biblatex}
\DeclareFieldFormat{titlecase}{\MakeTitleCase{#1}}

\newrobustcmd{\MakeTitleCase}[1]{%
  \ifthenelse{\ifcurrentfield{booktitle}\OR\ifcurrentfield{booksubtitle}%
    \OR\ifcurrentfield{maintitle}\OR\ifcurrentfield{mainsubtitle}%
    \OR\ifcurrentfield{journaltitle}\OR\ifcurrentfield{journalsubtitle}%
    \OR\ifcurrentfield{issuetitle}\OR\ifcurrentfield{issuesubtitle}%
    \OR\ifentrytype{book}\OR\ifentrytype{mvbook}\OR\ifentrytype{bookinbook}%
    \OR\ifentrytype{booklet}\OR\ifentrytype{suppbook}%
    \OR\ifentrytype{collection}\OR\ifentrytype{mvcollection}%
    \OR\ifentrytype{suppcollection}\OR\ifentrytype{manual}%
    \OR\ifentrytype{periodical}\OR\ifentrytype{suppperiodical}%
    \OR\ifentrytype{proceedings}\OR\ifentrytype{mvproceedings}%
    \OR\ifentrytype{reference}\OR\ifentrytype{mvreference}%
    \OR\ifentrytype{report}\OR\ifentrytype{thesis}}
    {#1}
    {\MakeSentenceCase{#1}}}
\AtBeginBibliography{\emergencystretch=3em}

\usepackage[subfigure]{tocloft}
\newtheorem{theorem}{Theorem}
\newtheorem{definition}{Definition}
\newtheorem{lemma}[theorem]{Lemma}

\newtheorem{postulate}{Postulate}

\Crefname{postulate}{Postulate}{Postulates}
\crefname{postulate}{Postulate}{Postulates}

\newtheoremstyle{example}{}{}{}{}{\bfseries}{.}{.5em}{Example \thmnumber{#2}}
\theoremstyle{example}
\newtheorem{example}{Example}

\newtheoremstyle{named}{}{}{\itshape}{}{\bfseries}{.}{.5em}{\thmnote{#3}}
\theoremstyle{named}

\makeatletter
\def\namedlabel#1#2{\begingroup
   \def\@currentlabel{#2}%
   \label{#1}\endgroup
}
\makeatother

\crefname{assumption}{assumption}{assumptions}

\numberwithin{equation}{section}

\crefname{equation}{}{}
\Crefname{equation}{}{}

\usepackage{tcolorbox}
\tcbuselibrary{theorems}
\tcbuselibrary{skins}
\tcbuselibrary{breakable}
\tcbuselibrary{hooks}

\newtcolorbox{resultbox}{%
    enhanced,breakable,
    frame hidden,
    colframe=Blue,
    colback=Blue!3,
    coltitle=black,
    fonttitle=\bfseries,
    colbacktitle=Blue!3,
    borderline={0.2mm}{0mm}{Blue},
    top=1mm,
    bottom=1mm,
    left=1.5mm,
    right=1.5mm,
    before skip=2.5ex,
    after skip=2.5ex
}

\newtcolorbox{postulatebox}
{
    enhanced,breakable,frame hidden,
    colframe=BrickRed,
    colback=BrickRed!3,
    coltitle=black,
    fonttitle=\bfseries,
    colbacktitle=BrickRed!3,
    borderline={0.2mm}{0mm}{BrickRed},
    top=1mm,
    bottom=1mm,
    left=1.5mm,
    right=1.5mm,
    before skip=2ex,
    after skip=2ex
}

\makeatletter
\newtcbtheorem[auto counter]{resolution}{Resolution}
{
    theorem style=plain,enhanced,breakable,frame hidden,
    colframe=ForestGreen,
    colback=ForestGreen!3,
    coltitle=black,
    fonttitle=\bfseries,
    colbacktitle=ForestGreen!3,
    borderline={0.2mm}{0mm}{ForestGreen},
    top=1mm,
    bottom=1mm,
    left=1.5mm,
    right=1.5mm,
    before skip=2ex,
    after skip=2ex
}
{res}
\makeatother

\Crefname{tcb@cnt@resolution}{Resolution}{Resolutions}
\crefname{tcb@cnt@resolution}{Resolution}{Resolutions}

\makeatletter
\AddToHook{cmd/appendix/before}{\def\cref@section@alias{appendix}}
\AddToHook{cmd/appendix/before}{\def\cref@subsection@alias{appendix}}
\makeatother

\newcommand{\id}{\mathrm{id}}